\title{On the Complexity of the Inverse Semivalue Problem\\ 
for Weighted Voting Games\footnote{An extended abstract 
of this work appears in the Proceedings of the Thirty-Third AAAI 
Conference on Artificial Intelligence (AAAI 2019).}}
\author{
Ilias Diakonikolas\thanks{Supported by NSF Award CCF-1652862 (CAREER) and a Sloan Research Fellowship.}\\
Computer Science Department\\
University of Southern California, USA\\
{\tt diakonik@usc.edu}\\
\and
Chrystalla Pavlou\thanks{Supported by EPSRC Scholarship. Some of this work was performed while visiting USC.}\\
School of Informatics\\
University of Edinburgh, UK\\
{\tt C.Pavlou@sms.ed.ac.uk }
}
\theoremstyle{plain} \newtheorem{definition}{Definition}
\theoremstyle{plain} \newtheorem{theorima}{Theorem}
\theoremstyle{plain} \newtheorem{lemma}{Lemma}
\theoremstyle{plain} 
\theoremstyle{plain} 
\theoremstyle{plain} \newtheorem{prop}{Proposition}
\theoremstyle{plain} \newtheorem{rem}{Remark}
\begin{document}
\maketitle

\thispagestyle{empty}
\setcounter{page}{0}

\begin{abstract}
Weighted voting games are a family of cooperative games, 
typically used to model voting situations where a number of agents (players) vote against or for a proposal.  
In such games, a proposal is accepted if an appropriately weighted sum of the votes exceeds a prespecified threshold. 
As the influence of a player over the voting outcome is not in general proportional to her assigned weight, 
various power indices have been proposed to measure each player's influence. 
The inverse power index problem is the problem of designing a weighted 
voting game that achieves a set of target influences according to a predefined power index. 
In this work, we study the computational complexity of the inverse problem when 
the power index belongs to the class of semivalues. We prove that the inverse problem 
is computationally intractable for a broad family of semivalues, including all regular semivalues. 
As a special case of our general result, 
we establish computational hardness of the inverse problem for the Banzhaf indices and the Shapley values, 
arguably the most popular power indices. 
\end{abstract}

\newpage

\section{Introduction} \label{sec:intro}
\subsection{Background and Motivation}
Weighted voting games are a classical family of cooperative games that have been extensively studied 
in the game theory and social choice literature. Such games model a common voting scenario where each agent (player), 
associated with a weight, casts a ``YES'' (for) or ``NO'' (against) vote: if the weighted sum of the ``YES'' votes exceeds  
a threshold, then the voting outcome is ``YES'', otherwise the outcome is ``NO''. Examples of such practical scenarios 
include the voting system of the European Union, stockholder companies, and resource allocation in multi-agent systems \cite{EGGW08,KKZ11}.

Although having a larger weight might help an agent affect the voting outcome, 
her influence on the result of the game is not always proportional to her weight. 
Thus, instead of using agents' weights, the power of an agent over the outcome 
is usually measured in a systematic way by a {\em power index}. 
Over the years, many power indices have been proposed and studied, 
such as the Shapley value~\cite{shapley} (also known as Shapley-Shubik index 
for weighted voting games~\cite{SS}), the Banzhaf index~\cite{ban}, 
the Deegan-Packel index~\cite{DP}, and the Holler index~\cite{Hol}. 
The problem of computing the agents' power indices in a given game 
has received ample attention and its computational complexity 
is well-understood for many game representations and power index functions 
(see, e.g., ~\cite{PK90,DP94,A08}). 

\subsection{Our Contributions} 
In this work, we focus on the {\em inverse power index problem} 
--- that is, the problem of {\em designing} a weighted voting game  with a given set of power indices.
As we will explain in detail below, the inverse problem has been extensively studied in various fields, 
including game theory, social choice theory, and learning theory. Various works have provided 
heuristic methods, exponential time algorithms, or polynomial time approximation algorithms
with provable performance guarantees for this problem. 

Despite this wealth of prior work on the algorithmic version of the inverse problem, 
prior to this work its computational complexity was not well-understood, even for the most popular
power indices (Shapley values, Banzhaf indices). {\em In this paper, we study and essentially resolve the computational complexity of the inverse power index problem for weighted voting games, with respect to a broad and extensively studied 
family of power indices. Specifically, we show that the inverse problem is computationally intractable 
under standard complexity assumptions.} 
More specifically, we prove that for a large class of power indices --- that includes the popular Banzhaf index, Shapley values, 
and the class of semivalues~\cite{W79} --- the inverse problem cannot be in the polynomial hierarchy (PH), 
unless the polynomial hierarchy collapses. Prior to this work, it was conceivable that there 
exists an exact polynomial time algorithm for this problem. It follows from our hardness result 
that the existence of such an algorithm is unlikely.

\subsection{Related work} 
Several heuristic algorithms for the inverse Banzhaf index problem have been proposed 
in the social choice and game theory literature. 
Aziz, Paterson, and
Leech \cite{APL07} give an approximation algorithm that on input target Banzhaf indices and a desired $\ell_2$-distance bound
outputs a weighted voting game with integer weights that has Banzhaf indices within the desired distance bound. 
Unfortunately, no theoretical guarantees are provided regarding the convergence rate of this method, 
and it is not known whether it converges to an approximately optimal solution. Two related heuristic algorithms 
that return weighted voting games with Banzhaf indices within a given distance from the target indices are proposed 
by  Laruelle and Widgren~\cite{LW98} and Leech~\cite{L03}. 
Similarly to~\cite{APL07}, Fatima, Wooldridge, and Jennings  \cite{FWJ08} 
give an iterative approximation algorithm for the inverse Shapley value problem that on input target Shapley values, 
a quota, and a desired average percentage difference, outputs a weighted voting game with the given quota 
that has Shapley values within the desired distance. 
It is shown that each iteration runs in quadratic time and that the algorithm eventually converges, 
but no theoretical guarantees are given regarding the convergence rate, i.e., the time 
until a desired approximation is achieved.

An exact algorithm for both the inverse Banzhaf index and inverse Shapley value problems is given by Kurz~\cite{K11}.
The proposed method relies on integer linear programming and returns a weighted voting game 
that minimizes the $\ell_1$-distance from the target power indices. 
This method has running time exponential in the number of players. 
Another exact, but exponential-time, algorithm for the inverse power index problem 
is given by De Keijzer, Klos, and Zhang~\cite{KKZ11}. 
The~\cite{KKZ11} algorithm outputs a weighted voting game 
in which the power indices of the players are as close to the target vector as possible. 
The presented algorithm is based on an enumeration of all weighted voting games: 
for each weighted voting game, the algorithm computes the power indices, 
their distance from the target ones and it then outputs the game with the smallest distance. 
Since there exist $2^{\Omega(n^2)}$ weighted voting games with $n$ players, 
this algorithm also runs in exponential time.

In addition to the several heuristics and exponential time algorithms that have been proposed, 
a line of recent works in theoretical computer science~\cite{OS11, DDFS12,DDS12} have obtained 
polynomial time approximation algorithms with provable performance guarantees
for the inverse problem with respect to both the Banzhaf indices and the Shapley values. 
These algorithms output a weighted voting game whose power indices have 
small $\ell_2$-distance from the target indices. These algorithmic results were recently 
extended to more general classes of functions~\cite{DK18-degd}.


The inverse power index problem is also of significant interest in various other fields, 
such as circuit complexity and computational learning theory. The reader is referred to~\cite{OS11}
for a detailed summary of this connection. In the fields of computational complexity and learning theory, 
linear threshold functions (LTFs), 
which are equivalent to weighted voting games if we allow negative weights~\cite{KKZ11}, 
have been of great significance and have been studied for several decades~\cite{Rosenblatt:58, chow, 
MTT:61, Dertouzos:65, MinskyPapert:68}.
A fundamental result of C. K. Chow from the the early 1960s~\cite{chow} 
shows that linear threshold functions are characterized by their degree-$0$ and degree-$1$ ``Fourier coefficients'', 
now known as {\em Chow parameters}. Given this structural result, the following natural computational question
--- now known as ``the Chow parameters problem''~\cite{OS11}  --- arises: 
Given the Chow parameters of an LTF, 
reconstruct a weights-based representation of the function. Interestingly, the Chow parameters 
are essentially equivalent to the non-normalized Banzhaf indices~\cite{DPS79}, and therefore 
the inverse Banzhaf index problem is tantamount to the Chow parameters problem.

In addition to the aforementioned algorithmic results, a number of 
complexity results have been established concerning weighted voting games.
Aziz \cite{A08}, Elkind {\em et al.}~\cite{EGGW07}, and Elkind {\em et al.}~\cite{EGGW08} 
study the computational complexity of various problems related to weighted voting games. 
Aziz~\cite{A08} studies the complexity of computing various indices such as the Shapley values, 
the Banzhaf, and the Deegan-Packel indices for a given simple game when the game is given in different forms.  
Finally, Faliszewski and Hemaspaandra \cite{FH08} study the complexity of the power index {\em comparison problem}: 
given two weighted voting games and a player, decide on which game the given player 
has higher influence as it is computed by a specific power index. They show that this problem is intractable, 
namely  PP-complete, for both the Shapley values and Banzhaf index. To achieve this, they extend the 
\#P-metric-completeness of computing the Shapley values, proved by Deng and Papadimitriou~\cite{DP94}. 
They prove that, whereas computing the Banzhaf indices of a weighted voting game is \#P-parsimonious-complete \cite{PK90}, computing the Shapley values is \#P-many-one complete and it cannot be strengthened to \#P-parsimonious-complete.
Gopalan, Nisan, and Roughgarden \cite{GNR15} study the convex polytope 
consisting of the Chow parameters of all Boolean functions. 
They show that the linear optimization problem over this polytope is \#P-hard; 
a result that indicates, but does not logically imply, that the inverse Banzhaf index problem may be intractable.


\section{Preliminaries}    
\paragraph{Notation}
We write $wt(x)$ to denote the {\em weight} of a Boolean vector $x \in \{-1,1\}^n$, 
i.e., the number of 1's in $x$. For any $i \in \{1, \dots, n\}$ and $x \in \{-1,1\}^n$ such that $x_i=-1$, 
we write $x^i$ to denote the vector obtained when we flip the $i^{th}$ coordinate of $x$. 
We denote by $\textbf{1}$ (resp. $\textbf{-1}$) the vector in $\{-1, 1\}^n$ with all coordinates equal to $1$ (resp. $-1$).
We will denote by $\mathrm{Conv}(S)$ the convex hull of the set $S$.
We will use $\mathrm{sign}: \mathbb{R} \rightarrow \mathbb{R}$ for the function 
that takes value $1$ if $z \geq 0$ and value $-1$ if $z<0$.

\smallskip

\noindent Our basic object of study is the family of linear threshold functions (LTFs) over $\{-1,1\}^n$:

\begin{definition}[Linear Threshold Function] \label{def:ltf}
A {\em linear threshold function (LTF)} is any function 
$f_{w, \theta}: \{-1,1\}^n \rightarrow \{-1, 1\}$ such that 
$f_{w, \theta}(x)=\mathrm{sign}(w \cdot x - \theta)$ for some weight vector 
$w \in \mathbb{R}^n$ and  threshold $\theta \in \mathbb{R}$.
\end{definition}

Note that weighted voting games are equivalent to LTFs with {\em non-negative weights}.
We leverage this equivalence throughout this paper. At various points, we 
may refer to a weighted voting game as an LTF without further elaboration.


\paragraph{Semivalues}
We mainly focus on power indices that belong to {\em the class of semivalues}. 
Semivalues are a fundamental family of power indices, introduced by Weber \cite{W79} 
as generalizations of the Shapley value that do not satisfy the efficiency axiom \cite{DNW81}. 
Since their introduction, semivalues have received considerable attention, see, e.g., \cite{E87,CFP03,CF08}.

We start by providing the definition of semivalues \cite{roth} in terms of weighting coefficients, 
as they were characterized by Dubey, Neyman, and Weber \cite{DNW81}:
\begin{definition}[Semivalues] \label{def:semiv}
For a positive integer $n$, a probability vector $p^n=(p_0^n, \dots, p_{n-1}^n) \in \mathbb{R}^n$ is 
a vector such that $\sum_{t=0}^{n-1}\binom{n-1}{t}p_{t}^n=1$ and $p_t^n \geq 0,\text{ for } t \in \{0, \dots, n-1\}$. 
The $i$-th semivalue corresponding to the probability vector $p^n$ of a Boolean function 
$f:\{-1,1\}^n \rightarrow \{-1,1\}$ is defined to be
\begin{align*}
\widetilde{f}^{p^n}(i)=\sum_{x \in \{-1,1\}^n: x_i=-1}p^n_{wt(x)}f(x)x_i+\sum_{x \in \{-1,1\}^n: x_i=1}p^n_{wt(x)-1}f(x)x_i \;,
\end{align*}
for $i \in \{1, \dots, n\}$. 
\end{definition}

Intuitively, we can interpret $p^n$ as the vector of probabilities that a given player 
will join a coalition of size $t$~\cite{CF08}, $0 \leq t \leq n-1$. With this interpretation, 
the $i$-th semivalue computes the probability of the event that player $i$ is a {\em pivot}, 
i.e., the probability that the output of the game would change from $1$ to $-1$ 
if the $i$-th player (the $i$-th variable) were to change her vote from $1$ to $-1$.

We call a semivalue {\em regular} if it is defined by strictly 
positive probability vectors~\cite{CF08}.

\paragraph{Remark}
The Shapley values and Banzhaf indices are the semivalues defined 
by $p^n_t=\frac{(n-t-1)! t!}{n!}$ \cite{SS} and $p^n_t=\frac{1}{2^{n-1}}$ \cite{DPS79}, respectively. 
Note that both indices are regular semivalues.

\paragraph{Reformulation of Semivalues}
The following equivalent way to express a set of semivalues will 
be useful throughout this paper.
Setting $p_{-1}^n=p_{n}^n=0$, we observe that we can rewrite the semivalues vector defined by the probability vector 
$p^n$ as follows.
For $i \in \{1, \dots, n\}$, we have:
\begin{align}\label{sem}
\widetilde{f}^{p^n}(i)&= \frac{1}{2}\sum_{x \in \{-1,1\}^n }f(x)x_i \left(p^n_{wt(x)}+p^n_{wt(x)-1}\right)
+\frac{1}{2}\sum_{x \in \{-1,1\}^n}f(x) \left(p^n_{wt(x)-1}-p^n_{wt(x)}\right) \;.
\end{align}
From this representation, a probability distribution $\mu_{p^n}$ over $\{-1,1\}^n$ emerges,
defined as follows. For $x \in \{-1,1\}^n$, we have: 
$$\mu_{p^n}(x): = \frac{\mu'_{p^n}(x)}{\Lambda(p^n)} \;,$$
where $\mu'_{p^n}(x): =p^n_{wt(x)}+p^n_{wt(x)-1}$ and 
$\Lambda(p^n):=\sum_{x \in \{-1,1\}^n}\mu'_{p^n}(x)$ is the normalizing factor. 

For a Boolean function $f: \{-1,1\}^n \rightarrow \{-1, 1\}$, we will write 
$\widehat{f}^{p^n}(i): =\sum_{x \in \{-1,1\}^n} \mu'_{p^n}(x)f(x)x_i$ for the first term of (\ref{sem}) 
and $C_f^{p^n} := \sum_{x \in \{-1,1\}^n}f(x)(p^n_{wt(x)-1}-p^n_{wt(x)})$ for the second term of (\ref{sem}). 
One can view the first term, $\widehat{f}^{p^n}(i)$, as the expectation $\mathbb{E}_{x \sim  \mu_{p^n}} [f(x)x_i]$ 
(up to the normalizing factor $\Lambda(p^n)$).

\medskip

We now define the notion of a {\em reasonable} probability vector to describe
the family of semivalues for which our computational hardness results apply:

\begin{definition}[Reasonable Probability Vector] \label{def:reas}
A probability vector $p^n \in \mathbb{R}^{n}$ is called {\em reasonable} if there exists a $t$ with
$t = \Omega(n)$ and $n-t = \Omega(n)$ such that $p^n_t > 0$. 
\end{definition}

The intuition behind the above definition is that the distribution $\mu_{p^n}$
has support $2^{\Omega(n)}$. Note that this happens when there exists
a $t$ with $\binom{n}{t} = 2^{\Omega(n)}$ such that $p^n_t > 0$. 
We recall that all regular semivalues (including the Banzhaf indices and Shapley values)
satisfy this property.

\begin{rem}
{\em For computational purposes, throughout this paper, we will assume that 
each value $p^n_t$ defining our probability vector is a rational number 
that can be described as a ratio of integers with $\mathrm{poly}(n)$ bits.}
\end{rem}

\paragraph{Inverse Semivalues Problem} We are ready to define the inverse semivalues problem. 
Let $n$ denote the number of players and consider the semivalues defined by a known
probability vector $p^n$. Given a vector $c = (c_1, \ldots, c_n)$ of target semivalues, 
we want to either find a weighted voting game with these target
semivalues or decide that there does not exist any weighted voting game with 
semivalues vector $c$. 

\smallskip

\medskip
\fbox{\parbox{6in}{
\noindent \textbf{Name:} SV$_{p^{n}}$-Inverse Problem

\medskip

\textbf{Input:} A vector $(c_1, \dots, c_{n}) \in \mathbb{Q}^{n}$ and $\theta \in \mathbb{Q}$.

\smallskip 

\textbf{Question:} Output $w \in \mathbb{Q}_{+}^n$ with $\sum_{i=1}^n w_i=1$ such that $\widetilde{f}_{w, \theta}^{p^n}(i)=c_i$, 
for $i \in \{1,\dots, n\}$, or ``NO'' if no such $w$ exists. 

}}

\section{Main Result: Computational Intractability of Inverse Power-Index Problem} \label{sec:main-thm}


\subsection{Statement of Main Result and Proof Overview}

The main result of this paper is the following:

\begin{theorima}[Main Result, Informal Statement] \label{thm:main}\label{main}
For semivalues defined by the probability vector $p^n$, if $p^n$ is a reasonable probability vector, 
then the SV$_{p^n}$-Inverse problem is not in the polynomial hierarchy, unless the polynomial hierarchy collapses.
\end{theorima}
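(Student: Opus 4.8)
The plan is to show that the feasibility (``YES/NO'') version of SV$_{p^n}$-Inverse is hard for the class $\mathrm{PP}$ under polynomial-time many-one reductions, and then to invoke Toda's theorem. Recall that if a $\mathrm{PP}$-complete problem lies in $\Sigma_k^{p}$, then $\mathrm{PP}\subseteq \Sigma_k^{p}$, hence $\mathrm{PH}\subseteq \mathrm{P}^{\mathrm{PP}}\subseteq \mathrm{P}^{\Sigma_k^{p}}=\Delta_{k+1}^{p}$, so the polynomial hierarchy collapses to its $(k{+}1)$-st level. Since any polynomial-hierarchy algorithm for the \emph{search} problem SV$_{p^n}$-Inverse in particular decides feasibility within $\mathrm{PH}$ (run it and check whether it outputs ``NO''), it suffices to establish $\mathrm{PP}$-hardness of feasibility. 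The $\mathrm{PP}$-complete problem I would reduce from is the threshold-counting problem for weighted voting games: given a weighted voting game $h=f_{a,b}$ on $m$ players (non-negative integer weights $a_i$, integer threshold $b$) and an integer $K$ in binary, decide whether $N_h:=|\{x\in\{-1,1\}^m : a\cdot x\geq b\}|\geq K$. This is the threshold version of the \#P-complete problem of counting winning coalitions (essentially \#Knapsack), and is $\mathrm{PP}$-complete.

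Given such an instance $(h,K)$, I would build in polynomial time a target pair $(c,\theta)$ for the $n$-variable inverse problem with $n=m+L$ for a suitable $L=\mathrm{poly}(m)$, as follows. Consider the weighted voting game $G$ on $n$ variables obtained by combining $h$ on the first $m$ coordinates with an appropriate threshold on $L$ ``dummy'' coordinates, scaled so the dummies matter only on the boundary $a\cdot x=b$ (or on a controlled sub-slice), so that the semivalue of $G$ on each dummy coordinate equals an explicitly computable, strictly monotone affine function $\phi(N_h)$ of the count, while its semivalues on the first $m$ coordinates are also explicitly computable from $(a,b)$. Here is where reasonableness of $p^n$ is essential: since $p^n_{t}>0$ for some $t$ with $t=\Theta(n)$ and $n-t=\Theta(n)$, the Hamming slice of size $\binom{n}{t}=2^{\Omega(n)}$ carries positive $\mu'_{p^n}$-mass (equivalently, the coefficients $p^n_{t-1}-p^n_t$ are non-degenerate there), so the combined game can be engineered to spread the value $N_h$ across a slice large enough that $\phi$ distinguishes all $2^{m}+1$ possible counts at $\mathrm{poly}$-bit resolution. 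I would then set $c$ to the semivalue vector that this combined game would have \emph{if} $N_h$ equalled $K$ (i.e.\ set the dummy coordinates of $c$ to $\phi(K)$ and the first $m$ coordinates to their forced values), and $\theta$ to the corresponding threshold. A generalized Chow-type uniqueness statement for reasonable semivalues --- that a weighted voting game is determined among Boolean functions by its semivalue vector, which I would prove separately using the $2^{\Omega(n)}$-support property --- then gives: $(c,\theta)$ is feasible iff the combined game with count exactly $K$ exists, i.e.\ iff $N_h=K$; a binary-search wrapper over $K$ (turning ``$=K$'' tests into ``$\geq K$'' tests, or arranging the gadget so that feasibility tracks $N_h\geq K$ monotonically) upgrades this into a many-one reduction from WVG-threshold-counting to feasibility of SV$_{p^n}$-Inverse.

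The main obstacle is the gadget, which has two delicate ingredients. First, the probability vector $p^n$ is \emph{given} --- we only know it is reasonable --- so the combined game $G$ and the dummy block must be tailored to whatever fat Hamming layer $t=\Theta(n)$ witnesses reasonableness of $p^n$: the thresholds, the number $L$ of dummies, and the scaling must be chosen so that $N_h$ is read off precisely on that layer, and so that the induced target $c$ has only polynomially many bits and the realizing weights stay non-negative and sum to $1$. Second, converting realizability of $c$ into an exact statement about $N_h$ requires the Chow-type uniqueness lemma for reasonable semivalues, and proving that lemma --- an LTF is pinned down by its generalized Chow/semivalue parameters when the underlying measure has exponentially large support --- is itself a non-trivial structural step, and it is the place where the ``reasonable'' hypothesis does the real work. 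Everything else --- the Toda-theorem wrapper, the binary search, and the bit-complexity bookkeeping --- is routine once these two pieces are in place.
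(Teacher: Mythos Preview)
Your high-level strategy---push a $\mathrm{PP}$-complete counting problem into feasibility of SV-Inverse and then invoke Toda---is natural, but the central step does not go through. The gap is in the equivalence ``$(c,\theta)$ is feasible iff $N_h=K$.'' The forward direction is fine: if $N_h=K$ then your combined game $G$ realizes $c$. But the backward direction is \emph{not} delivered by any Chow-type uniqueness statement. Uniqueness says that \emph{if} two weighted voting games share a semivalue vector then they coincide as functions; it does not say that a given target vector is unrealizable. When $N_h\neq K$, your game $G$ has dummy semivalues $\phi(N_h)\neq\phi(K)$, so $G$ does not realize $c$; yet nothing in your argument rules out some \emph{other} weighted voting game $f$ having semivalues exactly $c$. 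Uniqueness would then only tell you that $f$ is the unique such game---not that no such $f$ exists. There is a second, related gap: to write down $c$ in polynomial time you need the first $m$ coordinates---the semivalues of the original players in $G$---to be ``explicitly computable from $(a,b)$.'' But those are weighted counts of pivotal coalitions of $h$, i.e.\ precisely the \#P-hard quantities you are reducing from; any gadget that retains the $a_i$ as genuine weights leaves them hard, and any gadget that makes players $1,\ldots,m$ non-pivotal erases the dependence on $N_h$ altogether.

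The paper's proof avoids both issues by never testing realizability of a single target. It shows instead that \emph{linear optimization} over the polytope $\mathcal{C}_{p^{n+2}}=\mathrm{Conv}(\{\text{semivalue vectors of a restricted family of LTFs}\})$ is \#P-hard, via a generalized Khintchine-constant problem and ultimately \#Partition. Then: optimization over $\mathcal{C}_{p^{n+2}}$ reduces (ellipsoid method) to membership in $\mathcal{C}_{p^{n+2}}$; membership reduces (Carath\'eodory, guessing at most $n{+}3$ vertices together with realizing weight vectors) to the Verification problem; and Verification reduces to Inverse. The Chow-type lemma appears only in this last step, and there it is used in the setting where it actually bites: given a candidate game $(w,\theta)$ and a target $c$, call the Inverse oracle on $(c,\theta)$, obtain some $(w',\theta)$, and check with a co-NP query whether $f_{w,\theta}$ and $f_{w',\theta}$ agree on the support of $\mu_{p^n}$. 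That is a comparison between two \emph{actual} LTFs with the same threshold and normalized weights---exactly the hypothesis under which the uniqueness lemma applies.
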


As an immediate corollary of Theorem~\ref{thm:main}, we obtain that 
the inverse power index problem is similarly intractable for the class of regular semivalues, 
which includes the Shapley values and the Banzhaf indices.

\paragraph{Proof Overview}
We start with a brief overview of our proof establishing Theorem~\ref{main}.
To prove hardness of the inverse problem, we examine the convex polytope $\mathcal{C}_{p^{n+2}}$ consisting of the convex combinations of the semivalues of linear threshold functions with zero threshold and weight vectors 
of a specific form described below (Definition~\ref{poly}). 
We prove (Theorem~\ref{thm:opt-hard}) that if the probability vector $p^n$ defining 
the semivalues is {\em reasonable}, then the  linear optimization over $\mathcal{C}_{p^{n+2}}$ 
is \#P-hard (under Turing reductions). Then, we proceed to show (Theorem~\ref{thm:svr-hard}) 
that the optimization problem can be solved using an oracle for the semivalues {\em verification problem},  
i.e., the problem of verifying that the given target semivalues are the actual semivalues of a given linear threshold function, 
or an oracle for the {\em inverse problem} for weight vectors of the aforementioned specific form. 
We thus conclude that the verification and the inverse semivalues problem for linear threshold functions with this 
specific weight structure cannot be in the polynomial hierarchy. Finally, using a lemma that shows 
that semivalues characterize the space of linear threshold functions with the same threshold (Lemma~\ref{equiv}), 
we show hardness of the inverse and verification problems for linear threshold functions with positive weights, 
i.e., for weighted voting games, as desired.  

Our proof strategy bears some similarities to the approach by Gopalan, Nisan, and Roughgarden \cite{GNR15} 
(also exploited by Dughmi and Xu \cite{DH16}). In particular, Gopalan, Nisan, and Roughgarden \cite{GNR15} show 
that the linear optimization problem over the polytope consisting of the Chow parameters of all Boolean functions is \#P-complete,
and therefore there cannot exist an efficient membership oracle for this polytope.  We note that our results include the results of \cite{GNR15} regarding Chow parameters as a very special case: as previously mentioned, the non-normalized Banzhaf indices of a linear threshold function are equal to its Chow parameters and they are semivalues.  

Despite this similarity, our proof involves a number of novel ideas that seem necessary 
in order to handle a broad range of probability distributions that could define a semivalue. 
One of the difficulties comes from the fact that we want to prove hardness for the class 
of weighted voting games, i.e., LTFs with {\em positive} weights. While this requirement is easy 
to handle for the Banzhaf indices (Chow parameters), it poses non-trivial difficulties for more 
general semivalues. To handle this, we propose a generalization of the definition of the Khintchine constant 
from the uniform distribution to any probability distribution and establish that it is hard to compute 
under a restricted set of weights that is crucial for our proof (Theorem~\ref{thm:KK-hard}). 
Another crucial ingredient of our proof is a new structural result (Lemma~\ref{equiv}) 
establishing that the set of semivalues uniquely determines a weighted voting game. 

\subsection{Proof of Main Result}
In this subsection, we proceed with the detailed proof of 
Theorem~\ref{thm:main}.

\paragraph{Semivalues Polytope} Our analysis makes essential use of the 
convex polytope defined as the convex hull of the set of semivalues for 
all linear threshold functions whose weights-based representation is of a specific form: 
Namely, their threshold $\theta=0$ and their weight vectors consist of $n$ positive coordinates 
and two coordinates each of whose weights is equal to minus a half times the sum of the first $n$ coordinates.
Formally, we introduce the following definition:

\begin{definition} \label{poly}
For a positive integer $n$, define $\mathcal{C}_{p^{n+2}} := \mathrm{Conv}(\mathcal{A}_{p^{n+2}})$, 
where 
$$\mathcal{A}_{p^{n+2}} := \left\{c \in \mathbb{R}^{n+2}: \exists w \in \mathbb{R}^{n+2}, \; w_i > 0, \;1 \leq i \leq n,\; w_{n+1}=w_{n+2}=-\sum_{i=1}^nw_i/2,\; c_i=\widetilde{f}^{p^{n+2}}_{w,0}(i), \; 1 \leq i \leq n+2 \right\} \;.$$
\end{definition}

The first main step of our proof involves showing that the linear optimization problem
over the above defined polytope is computationally hard.

\paragraph{Linear Optimization over $\mathcal{C}_{p^{n+2}}$}
We firstly prove that if the semivalues' probability distribution defined by $p^{n+2}$ has a sufficiently large  support
(in particular, if $p^{n+2}$ is a reasonable vector as in Definition~\ref{def:reas}), 
then the linear optimization problem over the polytope $\mathcal{C}_{p^{n+2}}$ is \#P-hard. 

The linear optimization problem for semivalues defined by any probability vector $p^{n+2}$ 
is captured by the following family of problems:

\medskip

\fbox{\parbox{6in}{

\noindent \textbf{Name:} SV$_{p^{n+2}}$-Optimization Problem 

\smallskip

\textbf{Input:} A vector $a=(a_1, \dots, a_{n+2}) \in \mathbb{Q}^{n+2}$.\\
\textbf{Question:} Compute $\max_{c \in \mathcal{C}_{p^{n+2}}} a \cdot c$. 

}}

\medskip

The main result of this subsection is the following:

\begin{theorima} \label{thm:opt-hard}
If $p^{n+2}$ is a {\em reasonable} probability vector, the SV$_{p^{n+2}}$-Optimization Problem is \#P-hard.
\end{theorima}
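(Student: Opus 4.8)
The plan is to give a polynomial-time Turing reduction from the problem of \emph{exactly} computing a generalized Khintchine constant --- the quantity $\sum_{y \in \{-1,1\}^{n}}\bigl(p^{n+2}_{wt(y)}+p^{n+2}_{wt(y)+1}\bigr)\,|w''\cdot y|$ for a strictly positive rational weight vector $w''$ --- to the SV$_{p^{n+2}}$-Optimization Problem, and then to invoke the \#P-hardness of that quantity, which is the content of Theorem~\ref{thm:KK-hard} (established separately). The unusual weight structure baked into the polytope $\mathcal{C}_{p^{n+2}}$ is precisely what makes this reduction possible.

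First I would put the objective in a usable form. Since the semivalues of an LTF depend only on the underlying Boolean function, the set $\mathcal{A}_{p^{n+2}}$ is finite, so $\max_{c \in \mathcal{C}_{p^{n+2}}} a \cdot c = \max_{w}\, a \cdot \widetilde{f}^{p^{n+2}}_{w,0}$, the maximum taken over weight vectors $w$ of the prescribed form. Using the reformulation~(\ref{sem}), for each such $w$ we have $a \cdot \widetilde{f}^{p^{n+2}}_{w,0} = \tfrac12\sum_i a_i \widehat{f}^{p^{n+2}}_{w,0}(i) + \tfrac12\bigl(\sum_i a_i\bigr)C^{p^{n+2}}_{f_{w,0}}$. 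The crucial move is to feed the optimization oracle an input vector $a$ that \emph{itself} has the prescribed shape, namely $a = (w''_1,\dots,w''_n,-W''/2,-W''/2)$ with $W'' = \sum_{i\le n} w''_i$ and $w''$ strictly positive (the instance we reduce from). Then $\sum_i a_i = 0$, so the $C_{f_{w,0}}$ term disappears, and $\sum_i a_i \widehat{f}^{p^{n+2}}_{w,0}(i) = \sum_x \mu'_{p^{n+2}}(x)\, f_{w,0}(x)\,(a\cdot x)$. Since $\mu'_{p^{n+2}}(x) \ge 0$ and $|f_{w,0}(x)|=1$, this is bounded by $\sum_x \mu'_{p^{n+2}}(x)\,|a\cdot x|$, and --- the point of the construction --- taking $w = a$ yields a weight vector of the prescribed form whose LTF is exactly $\mathrm{sign}(a\cdot x)$, which attains the bound. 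Hence $\max_{c \in \mathcal{C}_{p^{n+2}}} a \cdot c = \tfrac12\sum_x \mu'_{p^{n+2}}(x)\,|a\cdot x|$.

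It remains to read off the Khintchine constant from this value. Writing $x = (y,x_{n+1},x_{n+2}) \in \{-1,1\}^{n+2}$ and splitting on $x_{n+1}+x_{n+2} \in \{-2,0,2\}$: on the ``middle'' slice $x_{n+1}=-x_{n+2}$ we have $a\cdot x = w''\cdot y$ and $wt(x)=wt(y)+1$, so this slice contributes exactly $2\sum_y\bigl(p^{n+2}_{wt(y)+1}+p^{n+2}_{wt(y)}\bigr)|w''\cdot y|$, i.e.\ twice the target constant. On the remaining two slices $a\cdot x = w''\cdot y \mp W''$ has a fixed sign (because $|w''\cdot y|\le W''$), so the absolute value disappears and those contributions become sums of the form $\sum_y g(wt(y))(w''\cdot y \pm W'')$; each such sum expands into $O(n)$ terms built from binomial coefficients, the $\mathrm{poly}(n)$-bit rationals $p^{n+2}_t$, and the $w''_i$ (using that $\sum_y g(wt(y))\,y_i$ is independent of $i$), hence is computable in polynomial time. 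So a single oracle call on $a$, minus a polynomial-time-computable correction, yields the generalized Khintchine constant exactly. Finally, because $p^{n+2}$ is \emph{reasonable} there is a $t$ with $t=\Omega(n)$, $n-t=\Omega(n)$, $p^{n+2}_t>0$, which makes the coefficients $p^{n+2}_{wt(y)+1}+p^{n+2}_{wt(y)}$ strictly positive on a Hamming layer of size $2^{\Omega(n)}$ --- precisely the regime in which the generalized Khintchine constant is \#P-hard.

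The main obstacle is that last ingredient, Theorem~\ref{thm:KK-hard}: showing that when the layers of $\{-1,1\}^n$ are reweighted by an essentially arbitrary reasonable vector and $w''$ is confined to the positive, polynomially-bounded family produced by the gadget, computing $\sum_y g(wt(y))\,|w''\cdot y|$ exactly is \#P-hard. For the uniform measure this is the classical fact that $\mathbb{E}\bigl[\,|w''\cdot y|\,\bigr]$ encodes a knapsack/subset-sum counting problem, but the standard reductions lean on uniformity; the technical heart is to redesign the counting reduction so that it still faithfully encodes a \#P-hard problem under an arbitrary non-uniform, weight-based reweighting of the cube, while keeping the weight vector positive and small. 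Everything else --- the passage to the optimization problem and the bookkeeping for the correction terms --- is routine once the objective has been brought into the form above.
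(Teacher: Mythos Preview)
Your approach is essentially the paper's: feed the optimization oracle a vector $a=(w''_1,\dots,w''_n,-W''/2,-W''/2)$ of the same shape as the polytope's weight vectors, use $\sum_i a_i=0$ to kill the $C_f^{p^{n+2}}$ term, and observe that $f_{a,0}$ itself lies in the polytope and attains the bound $\tfrac12\sum_x \mu'_{p^{n+2}}(x)\,|a\cdot x|$. That is exactly the paper's argument.

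The only difference is your slice decomposition, which is an unnecessary detour. Theorem~\ref{thm:KK-hard} as stated is \emph{already} about the $(n{+}2)$-dimensional quantity $K_{\mu_{p^{n+2}}}(a)=\mathbb{E}_{x\sim\mu_{p^{n+2}}}[|a\cdot x|]$ on inputs of the form $(a_1,\dots,a_n,-A/2,-A/2)$, not the $n$-dimensional sum $\sum_y(p^{n+2}_{wt(y)}+p^{n+2}_{wt(y)+1})|w''\cdot y|$ you extract. So the paper simply reads off $K_{\mu_{p^{n+2}}}(a)=\tfrac{2}{\Lambda(p^{n+2})}\max_{c\in\mathcal{C}_{p^{n+2}}} a\cdot c$ and is done --- no correction terms. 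Your decomposition is correct (and in fact shows that your $n$-dimensional quantity and the paper's $(n{+}2)$-dimensional one are polynomial-time equivalent), but once you notice what Theorem~\ref{thm:KK-hard} actually asserts you can delete that whole paragraph.
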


\smallskip

We prove Theorem~\ref{thm:opt-hard} by reducing from an intermediate problem --- that of 
computing the Khintchine constant of a vector {\em with respect to the probability distribution $\mu_{p^n}$}:

\medskip

\fbox{\parbox{6in}{
\noindent \textbf{Name:} Khintchine$_{\mu_{p^n}}$ 

\smallskip 

\textbf{Input:} A vector $a = (a_1, \dots, a_n) \in \mathbb{Q}^{n}$.\\
\textbf{Question:} Compute $K_{\mu_{p^n}}(a)=\mathbb{E}_{x \sim \mu_{p^n}}[|a \cdot x|]$.  

}}

\medskip

The Khintchine constant has been extensively studied with respect to the uniform distribution
on the Boolean hypercube (see, e.g.,~\cite{Szarek:76, DeDS16} and references therein). We note that 
\cite{GNR15} established the intractability of computing this quantity under the uniform distribution.
We show:

\begin{theorima} \label{thm:KK-hard}
If $p^{n+2}$ is a reasonable probability vector, the Khintchine$_{\mu_{p^{n+2}}}$ problem is \#P-hard, 
even restricted to inputs $(a_1, \dots, a_n, -A/2, -A/2)$, where $A=\sum_{i=1}^n a_i$ 
and $a_i>0$ for $i \in \{1, \dots, n\}$.
\end{theorima}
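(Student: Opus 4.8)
The plan is to reduce from computing the (unnormalized) Khintchine constant under the \emph{uniform} distribution, $U(a):=\sum_{x\in\{-1,1\}^m}|a\cdot x|=\sum_{S\subseteq[m]}|2\sigma(S)-B_0|$ where $\sigma(S)=\sum_{i\in S}a_i$ and $B_0=\sum_i a_i$; this is \#P-hard by Gopalan, Nisan and Roughgarden~\cite{GNR15}. By replacing each $a_i$ with $|a_i|$ (the substitution $x_i\mapsto -x_i$ is a bijection of the cube and fixes $U(a)$) and deleting zero coordinates, we may assume $a=(a_1,\dots,a_m)$ consists of positive integers.

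First I would record the identity produced by the reformulation of Section~2. Write $q_j:=p^{n+2}_j$ (with $q_{-1}=q_{n+2}=0$) and $w_k:=q_k+q_{k+1}$. Splitting an input $x=(y,z_1,z_2)\in\{-1,1\}^{n+2}$ according to $z_1+z_2\in\{-2,0,2\}$, grouping by $wt(y)$, and using $\sum_{wt(y)=k}(a'\cdot y)=A\bigl(\binom{n-1}{k-1}-\binom{n-1}{k}\bigr)$ (hence $\sum_{wt(y)=k}(A\mp a'\cdot y)=2A\binom{n-1}{k}$ or $2A\binom{n-1}{k-1}$), one gets, for any input of the restricted form with $A=\sum_{i=1}^n a'_i$,
\[
\Lambda(p^{n+2})\cdot K_{\mu_{p^{n+2}}}(a',-A/2,-A/2)\;=\;T(a')\;+\;2\sum_{k=0}^{n}w_k\,G_k(a'),\qquad G_k(a'):=\sum_{\substack{S\subseteq[n]\\ |S|=k}}\bigl|2\sigma(S)-A\bigr|,
\]
where $T(a')=2A\sum_k\bigl[(q_{k+2}+q_{k+1})\binom{n-1}{k}+(q_k+q_{k-1})\binom{n-1}{k-1}\bigr]$ and $\Lambda(p^{n+2})$ are both computable in polynomial time (using that each $q_j$ has $\mathrm{poly}(n)$ bits). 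Thus one oracle call yields $\sum_k w_k G_k(a')$.

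Next comes the ``pinning'' step. Given the instance $a$ and an even parameter $r$, query the oracle on the input built from $a'=(Ra_1,\dots,Ra_m,\underbrace{D,\dots,D}_{r})$, so $n=m+r$, where $R$ and $D$ are poly-bit integers with $D>RB_0$. For $S=S_0\sqcup S_1$ with $S_0$ among the reals and $|S_1|=\ell$ among the dummies, $2\sigma(S)-A=R(2\sigma(S_0)-B_0)+(2\ell-r)D$, whose sign is forced to be $\mathrm{sign}(2\ell-r)$ whenever $\ell\neq r/2$; hence for $\ell\neq r/2$ the term $w_{|S_0|+\ell}|2\sigma(S)-A|$ splits into a $D$-term and a term proportional to $2\sigma(S_0)-B_0$, and summing each over $S_0$ yields quantities expressible via $\sum_j\binom{m}{j}w_{j+\ell}$, $\sum_j\binom{m-1}{j-1}w_{j+\ell}$ and $B_0$ — hence polynomial-time computable. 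For $\ell=r/2$ one has $|2\sigma(S)-A|=R\,|2\sigma(S_0)-B_0|$ exactly. Subtracting the computable part and dividing by $R\binom{r}{r/2}$, the oracle call therefore extracts
\[
\Psi_r\;:=\;\sum_{j=0}^{m}w^{(m+r+2)}_{\,j+r/2}\;G^{(m)}_j(a),\qquad G^{(m)}_j(a):=\sum_{\substack{S_0\subseteq[m]\\ |S_0|=j}}\bigl|2\sigma(S_0)-B_0\bigr|,\qquad w^{(N)}_k:=p^{N}_k+p^{N}_{k+1},
\]
with $R,D$ of only $\mathrm{poly}(m)$ bits, so each query is of polynomial size.

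Finally, since $U(a)=\sum_{j=0}^{m}G^{(m)}_j(a)$, it remains to recover this sum from the numbers $\Psi_r$, and this is where the \emph{reasonable} hypothesis enters. Reasonableness gives some $t^*=t^*(N)$ with $t^*=\Theta(N)$ and $p^{N}_{t^*}>0$, hence $w^{(N)}_{t^*-1},w^{(N)}_{t^*}>0$; choosing $r$ so that $j+r/2$ sweeps a window straddling $\{t^*(m+r+2)-1,\ t^*(m+r+2)\}$, the coefficient vector $(w^{(m+r+2)}_{j+r/2})_{j=0}^m$ is supported on a controlled set of consecutive indices, and letting $r$ range over a polynomial set slides this window across $\{0,\dots,m\}$. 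One then solves the resulting polynomial-size, poly-bit linear system for the $G^{(m)}_j(a)$ — using the trivial values $G^{(m)}_0(a)=G^{(m)}_m(a)=B_0$ to close the endpoints — and outputs $\sum_j G^{(m)}_j(a)=U(a)$, giving a polynomial-time Turing reduction. The main obstacle is exactly this last step: showing that for \emph{every} reasonable $p^{N}$ — including degenerate ones, where $\mu_{p^{N}}$ is supported on only two adjacent layers and each $\Psi_r$ mixes at most two consecutive $G^{(m)}_j(a)$ — one can choose a polynomial family of parameters $r$ whose coefficient matrix has the all-ones vector in its row space; this requires care with the dependence of $t^*(N)$ on $N$ and with parity, and is where the bulk of the technical work lies.
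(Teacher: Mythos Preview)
Your approach is genuinely different from the paper's, and the obstacle you identify at the end is real and, as far as I can see, fatal to the scheme as stated.

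The paper does \emph{not} reduce from the uniform Khintchine problem. Instead it goes through an intermediate counting problem $\#\mathrm{Partition}_{\mu_{p^{n+2}}}$: compute $\Pr_{x\sim\mu_{p^{n+2}}}[w\cdot x=0]$ for $w$ of the required restricted form. Hardness of this is obtained from $\#\mathrm{R}$\textsc{-Partition} (the variant where every balanced subset has size exactly $k$ or $n-k$, with $k=\Theta(n)$); the point is that with the two extra coordinates $-W/2,-W/2$, every zero of $w\cdot x$ other than $\pm\mathbf 1$ lies on Hamming level $k{+}1$ or $n{-}k{+}1$, so the probability mass is a known affine function of the solution count with denominator $p^{n+2}_{k}+p^{n+2}_{k+1}+p^{n+2}_{n-k}+p^{n+2}_{n-k+1}$. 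The ``reasonable'' hypothesis is used here in one line: it furnishes a $t=\Theta(n)$ with $p^{n+2}_t>0$, and one takes $k$ so that this denominator is nonzero. The passage from $\#\mathrm{Partition}_{\mu_{p^{n+2}}}$ to $\mathrm{Khintchine}_{\mu_{p^{n+2}}}$ then needs only three oracle calls on $c=2a$, $d=a$ with $a_1\mapsto a_1-y$, $e=a$ with $a_1\mapsto a_1+y$ (adjusting the last two coordinates to keep the restricted form), using the elementary identity $|u-y|+|u+y|=2\max(|u|,|y|)$ to isolate the contribution of $\{x:a\cdot x=0\}$.

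By contrast, your plan requires that the windows $j\mapsto w^{(m+r+2)}_{j+r/2}$ sweep through all of $\{0,\dots,m\}$ as $r$ varies. But take the reasonable family $p^N_t=\binom{N-1}{\lfloor N/2\rfloor}^{-1}\mathbf 1[t=\lfloor N/2\rfloor]$. Then $w^{(N)}_k>0$ only for $k\in\{\lfloor N/2\rfloor-1,\lfloor N/2\rfloor\}$, and with $N=m+r+2$ and $r$ even one gets $\lfloor N/2\rfloor-r/2\in\{\lfloor m/2\rfloor,\lfloor m/2\rfloor+1\}$ \emph{independently of $r$}. So every $\Psi_r$ is a combination of $G^{(m)}_j(a)$ for $j$ in a fixed two- or three-element set near $m/2$; the window does not slide at all, and the symmetry $G^{(m)}_j=G^{(m)}_{m-j}$ does not help since $\sum_j G^{(m)}_j(a)$ depends on all $j$. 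More generally, whenever $t^*(N)=N/2+O(1)$ the same collapse occurs, and the linear system you set up has rank $O(1)$ rather than $m{+}1$. I do not see a way to repair this within the padding framework: the phenomenon is intrinsic, since adding $r$ dummies shifts both the query dimension and the layer index by $r/2$ in lockstep. The paper's route sidesteps the issue entirely because $\#\mathrm{R}$\textsc{-Partition} already localizes the combinatorics to a single pair of Hamming levels, so one only needs positivity of $p^{n+2}$ at one well-chosen $t$, not a full-rank family of layer-restrictions.
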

\begin{proof}

We start by showing that the \#Partition problem for the distribution $\mu_{p^n}$ is hard 
and then reduce the latter problem to the former.
 
\medskip

\noindent \textbf{Name:} \#Partition$_{\mu_{p^n}}$

\smallskip

\noindent \textbf{Input:} A vector $w=(w_1, \dots, w_n) \in \mathbb{Z}^{n}$.\\
\textbf{Question:} Compute $\text{Pr}_{x \sim \mu_{p^n}}[w \cdot x = 0]$.  

\medskip 

We start with the following proposition:

\begin{prop} \label{thm:partition}
If $p^{n+2}$ is a reasonable probability vector, \#Partition$_{\mu_{p^{n+2}}}$ is \#P-hard, 
even restricted to inputs $(w_1, \dots, w_n, -W/2, -W/2)$, 
where $W=\sum_{i=1}^n w_i$ and $w_i>0$ for $1 \leq i \leq n$.
\end{prop}
\begin{proof} We reduce from the following problem:

\medskip

\noindent \textbf{Name:} \#R-Partition \\ 
\textbf{Input:} Positive integers $c_j,1 \leq j \leq n$ and a positive integer $b_1n \leq k \leq b_2n$, 
where $0 < b_1 \leq b_2 < 1$, such that if there is a subset $S$ of $\{1,\dots,n\}$ 
with $\sum_{i \in S} c_i=\sum_{i=1}^{n} c_i/2$, then $|S|=k$ or $|S|=n-k$.\\ 
\textbf{Question:} Compute the number of subsets $S$ of $\{1,\dots,n\}$ such that 
$\sum_{i \in S} c_i=\sum_{i=1}^{n} c_i/2$. 

\medskip

Given an instance $c_1, \dots, c_{n}$ of the \#P-complete \#R-Partition \cite{DP94}, 
we construct an instance of \#Partition$_{\mu_{p^{n+2}}}$ as follows: 
We set $w_i = c_i$, for $1 \leq i \leq n$, and define $w= (w_1, \dots, w_{n}, -W/2, -W/2)$, 
where $W=\sum_{i=1}^n w_i$. We have: 

\begin{align*} 
\text{Pr}_{x \sim \mu_{p^{n+2}}} [w \cdot x =0]
&=\sum_{x \in \{-1,1\}^{n+2}: w \cdot x =0}\mu_{p^{n+2}}(x)=
\frac{|\{x:w \cdot x=0, x \not \in \{\textbf{-1},\textbf{1}\}\}|}{2\Lambda(p^{n+2})}(p^{n+2}_{n-k+1}+p^{n+2}_{n-k})\\
&+\frac{|\{x:w \cdot x=0, x \not \in \{\textbf{-1},\textbf{1}\}\}|}{2\Lambda(p^{n+2})}(p^{n+2}_{k+1}+p^{n+2}_{k})
+\frac{1}{\Lambda(p^{n+2})}(p^{n+2}_{n+1}+p^{n+2}_{0}) \;,
\end{align*} 
as the \#R-Partition problem guarantees that every solution has size $k$ or $n-k$ 
and the number of solutions with size $k$ are equal to the number of solutions with size $n-k$. 
That is, every $x \in \{-1,1\}^{n+2} \setminus \{\textbf{-1},\textbf{1}\}$ such that $w \cdot x=0$ 
is guaranteed to have weight $k+1$ or weight $n-k+1$ as $x_{n+1}$ has to be different 
than $x_{n+2}$, otherwise the only solutions are $\{\textbf{-1},\textbf{1}\}$. 
So, for every solution of the \#R-partition problem we have two 
$x \in \{-1,1\}^{n+2} \setminus \{\textbf{1}, \textbf{-1}\}$ such that $w \cdot x=0$. 
Thus, 
\begin{align*}
\left| \left\{ S \subset \{1,\dots,n\}:\sum_{i \in S} c_i =\sum_{i=1}^{n}c_i/2 \right\} \right| =  
\frac{\Lambda(p^{n+2})\text{Pr}_{x \sim \mu_{p^{n+2}}} [w \cdot x =0]-(p^{n+2}_{n+1}+p^{n+2}_{0})}{(p^{n+2}_{n-k+1}+p^{n+2}_{n-k}+p^{n+2}_{k+1}+p^{n+2}_{k})} \;. 
\end{align*} 
This completes the proof of Proposition~\ref{thm:partition}.
\end{proof}

Given an instance of \#Partition$_{\mu_{p^{n+2}}}$, i.e., 
$a=(a_1,\dots,a_{n}, -A/2, -A/2)$, where $A=\sum_{i=1}^na_i$ and $a_i > 0$ for $1 \leq i \leq n$, 
we construct the following three Khintchine$_{\mu_{p^{n+2}}}$ instances:
\begin{align*}
    c&=2(a_1, a_2,\dots, a_{n},-A/2,-A/2), \\ 
    d&=(a_1-y, a_2, \dots, a_{n},-A/2+y/2,-A/2+y/2),\\
    e&=(a_1+y,a_2, \dots, a_{n},-A/2-y/2,-A/2-y/2),
\end{align*}
where $0<y< 1/2 < a_1$. We will show that solving the above instances of Khintchine$_{\mu_{p^{n+2}}}$ suffices
to solve our given instance of \#Partition$_{\mu_{p^{n+2}}}$. This will complete the proof of Theorem~\ref{thm:KK-hard}. 
To do so, we require some case analysis and explicit calculations. 

\smallskip


\noindent For any $x \in \{-1,1\}^{n+2}$, we have that:
\begin{align*}
 &|d \cdot x|= \begin{cases} 
       |\sum_{i=1}^na_ix_i +A-2y|, & \textrm{ if  } x_1=1, x_{n+1}=x_{n+2}=-1\\
	  |\sum_{i=1}^na_ix_i+A|, & \textrm{ if  } x_1=-1, x_{n+1}=x_{n+2}=-1\\
      |\sum_{i=1}^na_ix_i-A|, &\textrm{ if  } x_1=1,x_{n+1}=x_{n+2}=1 \\
     |\sum_{i=1}^na_ix_i-A+2y|, &\textrm{ if  } x_1=-1, x_{n+1}=x_{n+2}=1\\ 
  |\sum_{i=1}^na_ix_i-y|, &\textrm{ if  } x_1=1,x_{n+1}\neq x_{n+2}\\ 
|\sum_{i=1}^na_ix_i+y|, &\textrm{ if  } x_1=-1,x_{n+1}\neq x_{n+2}\\  
   \end{cases} 
\end{align*}
and similarly
\begin{align*}
 |e \cdot x|= \begin{cases} 
       |\sum_{i=1}^na_ix_i+A+2y|, &\textrm{ if  } x_1=1,x_{n+1}=x_{n+2}=-1\\
	  |\sum_{i=1}^na_ix_i+A|, &\textrm{ if  } x_1=-1, x_{n+1}=x_{n+2}=-1\\
      |\sum_{i=1}^na_ix_i-A|, &\textrm{ if  } x_1=1, x_{n+1}=x_{n+2}=1 \\
     |\sum_{i=1}^na_ix_i-A-2y|, &\textrm{ if  } x_1=-1, x_{n+1}=x_{n+2}=1\\ 
  |\sum_{i=1}^na_ix_i+y|, &\textrm{ if  } x_1=1, x_{n+1}\neq x_{n+2}\\ 
|\sum_{i=1}^na_ix_i-y|, &\textrm{ if  } x_1=-1,x_{n+1}\neq x_{n+2}\\ 
   \end{cases} 
\end{align*}
Given the above, we observe that the following hold: 
\begin{itemize}
\item For $x \in \{-1,1\}^{n+2}$, $wt(x) \in \{0, n+2\}$,
\[|c \cdot x|= |e \cdot x|=|d \cdot x|=0 \;.\]
\item For $x \in \{-1,1\}^{n+2}$, $x_{n+1} = x_{n+2}=x_1$,
\[|d \cdot x| +|e \cdot x|=|c \cdot x| \;.\]
\item For $x \in \{-1,1\}^{n+2}$, $x_{n+1} \neq x_{n+2}$: 
\begin{itemize}
\item If $c \cdot x \neq 0$,
\[|d \cdot x|+|e \cdot x|=|c \cdot x| \;,\]
as it holds that 
$|\sum_{i=1}^n a_ix_i-y|+|\sum_{i=1}^n a_i x_i+y|=2\max(|\sum_{i=1}^n a_i x_i|,|y|)$ 
and $|\sum_{i=1}^n a_i x_i| \geq 1$.
\item If $c \cdot x=0$, then
\[ |d \cdot x|+|e \cdot x|=2y \;. \] 
\end{itemize}
\item For $x \in \{-1,1\}^{n+2}$, $x_{n+1}=x_{n+2}$, $x_1 \neq x_{n+1}$
\[|d \cdot x|+|e \cdot x|=|c \cdot x| \;, \]
as similarly with the above case, we have that 
$|\sum_{i=1}^{n+2}a_ix_i-2y|+|\sum_{i=1}^{n+2}a_i x_i+2y|=2\max(|\sum_{i=1}^{n+2}a_i x_i|,|2y|)$ 
and $|\sum_{i=1}^{n+2} a_i x_i| \geq 1 $.
\end{itemize}
Hence, we have:  
\begin{dmath*}
    K_{\mu_{p^{n+2}}}(d)+K_{\mu_{p^{n+2}}}(e)-K_{\mu_{p^{n+2}}}(c) = 
    \mathbb{E}_{x \sim \mu_{p^{n+2}}}\left[ |d \cdot x| \right]+
    \mathbb{E}_{x \sim \mu_{p^{n+2}}}\left[|e \cdot x|\right]-
     \mathbb{E}_{x \sim \mu_{p^{n+2}}}\left[|c \cdot x|\right] = 
    \sum_{x \in \{-1,1\}^{n+2}} 2y \cdot \mu_{p^{n+2}}(x) \cdot \mathds{1}_{c \cdot x=0, x \not \in \{\bf{-1},\bf{1}\}} \;.
\end{dmath*}
So, we get:
\begin{align*}
\mathrm{Pr}_{x \sim \mu_{p^{n+2}}}[a \cdot x=0]=& \frac{K_{\mu_{p^{n+2}}}(d)+K_{\mu_{p^{n+2}}}(e)-K_{\mu_{p^{n+2}}}(c)}{2y}+\mathrm{Pr}_{x \sim \mu_{p^{n+2}}}[x=\textbf{-1}]+\mathrm{Pr}_{x \sim \mu_{p^{n+2}}}[x=\bf{1}]\\&=\frac{K_{\mu_{p^{n+2}}}(d)+K_{\mu_{p^{n+2}}}(e)-K_{\mu_{p^{n+2}}}(c)}{2y}+\frac{p_0+p_{n+1}}{\Lambda(p^{n+2})} \;.
\end{align*}
The proof of Theorem~\ref{thm:KK-hard} is now complete.
\end{proof}

We are now ready to prove Theorem~\ref{thm:opt-hard}.
\begin{proof}[Proof of Theorem~\ref{thm:opt-hard}]
We reduce from the Khintchine$_{\mu_{p^{n+2}}}$ problem: given the vector 
$a=(a_1, \dots, a_n, a_{n+1}=-A/2, a_{n+2}= -A/2) \in \mathbb{Q}^{n+2}$, 
where $A=\sum_{i=1}^n a_i$ and $a_i >0$ for $0 \leq i \leq n$, we want to compute 
$\max_{c \in \mathcal{C}_{p^{n+2}}}a \cdot c$. 
For any $c \in \mathcal{C}_{p^{n+2}}$, using our reformulation of semivalues (\ref{sem}), we have:
\begin{align*}
a \cdot c&= \frac{1}{2}\sum_{i=1}^{n+2}a_i \widehat{f}^{p^{n+2}}(i)+\frac{1}{2}\sum_{i=1}^{n+2}a_iC_f^{p^{n+2}}
\\&=\frac{\Lambda(p^{n+2})}{2}\sum_{x \in \{-1,1\}^{n+2}}f(x)\mu_{p^{n+2}}(x)\sum_{i=1}^{n+2}a_ix_i 
 \\&\leq \frac{\Lambda(p^{n+2})}{2}\sum_{x \in \{-1,1\}^{n+2}}\mu_{p^{n+2}}(x) \left| \sum_{i=1}^{n+2}a_ix_i \right|\;,
\end{align*}
where $f$ is a linear threshold function. 

Reducing from a weight vector that has the sum of its coordinates equal to zero was essential for this step: 
the term $C_f^{p^{n+2}}$ vanishes and we end up with the term 
$\sum_{x \in \{-1,1\}^{n+2}}f(x)\mu_{p^{n+2}}(x)\sum_{i=1}^{n+2}a_ix_i$ 
that is upper bounded by  $\sum_{x \in \{-1,1\}^{n+2}}\mu_{p^{n+2}}(x)|\sum_{i=1}^{n+2}a_ix_i|$. 

This upper bound is tight, as we show below, which is crucial for our argument.
If one were to reduce from a vector with sum different than $0$ or include in the polytope only linear threshold functions with positive weights, 
it would not have been possible to obtain a tight upper bound.

Observe that 
\[a \cdot c =\frac{\Lambda(p^{n+2})}{2} \sum_{x \in \{-1,1\}^{n+2}}\mu_{p^{n+2}}(x)\left|\sum_{i=1}^{n+2}a_ix_i \right|\;,\] 
for $c=(\widetilde{f}^{p^{n+2}}(1),\dots, \widetilde{f}^{p^{n+2}}(n+2))$, 
where $f(x)=\mathrm{sign}(\sum_{i=1}^{n+2}a_ix_i).$
So, 
\[\mathbb{E}_{x \sim \mu_{p^{n+2}}}\left[\left|\sum_{i=1}^{n+2}a_ix_i \right|\right]=
\frac{2\max_{c \in \mathcal{C}_{p^{n+2}}}a \cdot c}{\Lambda(p^{n+2})} \;.\]
That is, a solution to our instance of linear optimization over our polytope
gives a solution to the initial instance of the Khintchine$_{\mu_{p^{n+2}}}$ problem, 
and the proof of Theorem~\ref{thm:opt-hard} is complete.
\end{proof}

\paragraph{Linear Optimization Using a Verification Oracle} 
We prove that the {\em restricted verification problem}, defined below, is computationally hard, where 
the input linear threshold functions are defined by weight vectors of the specific 
form described in Definition \ref{poly}. 

\medskip

\fbox{\parbox{6in}{
\noindent \textbf{Name:} SVR$_{p^{n+2}}$-Verification Problem 

\smallskip

\textbf{Input:} A vector $(c_1, \dots, c_{n+2}) \in \mathbb{Q}^{n+2}$ and a vector $w=(w_1, \dots, w_{n+2})\in \mathbb{Q}^{n+2}$ such that $w_{n+1}=w_{n+2}=-\sum_{i=1}^nw_i/2$ and $w_i>0, i \in \{1, \dots, n\}$.\\
\textbf{Question:} Does it hold that $\widetilde{f}^{p^{n+2}}_{w, 0}(i)=c_i$ for $i \in \{1,\dots, {n+2}\}$?

}}

\medskip

\begin{theorima} \label{thm:svr-hard} \label{hi}
If $p^{n+2}$ is a reasonable probability vector, the SVR$_{p^{n+2}}$-Verification problem 
is not in the $k$-th level of the polynomial hierarchy, 
unless \#P is contained in the $(k+2)$-level.
\end{theorima}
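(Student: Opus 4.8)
The plan is to show that an oracle for the SVR$_{p^{n+2}}$-Verification problem lets one solve the SV$_{p^{n+2}}$-Optimization problem inside the polynomial hierarchy; chaining this with the \#P-hardness (under polynomial-time Turing reductions) of the optimization problem from Theorem~\ref{thm:opt-hard} forces \#P into the $(k+2)$-th level whenever Verification lies in the $k$-th level, and the contrapositive is the claim.

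First I would identify the optimizer explicitly. Given an optimization instance $a=(a_1,\dots,a_n,-A/2,-A/2)$ with $A=\sum_{i=1}^n a_i$ and $a_i>0$, recall from the proof of Theorem~\ref{thm:opt-hard} that $\max_{c\in\mathcal{C}_{p^{n+2}}} a\cdot c = a\cdot c^\star$, where $c^\star=(\widetilde{f}^{p^{n+2}}_{a,0}(1),\dots,\widetilde{f}^{p^{n+2}}_{a,0}(n+2))$ is the semivalues vector of the LTF with zero threshold and weight vector \emph{exactly} $a$. The point is that $a$ itself already has the shape required in Definition~\ref{poly} (positive first $n$ coordinates, last two equal to $-\sum_{i=1}^n a_i/2$), so $(c^\star,a)$ is a legal input to SVR$_{p^{n+2}}$-Verification, and by construction $c^\star$ is the \emph{unique} vector $c$ for which $(c,a)$ is a YES-instance. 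Hence solving the optimization instance reduces to recovering this unique accepted vector and returning $a\cdot c^\star$.

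Next I would recover $c^\star$ by prefix search on its binary encoding. Since each $p^{n+2}_t$ is a ratio of integers with $\mathrm{poly}(n)$ bits, clearing the (still poly-bit) common denominator shows each coordinate $\widetilde{f}^{p^{n+2}}_{a,0}(i)$ is a rational whose numerator is a signed sum of at most $2^{n+2}$ poly-bit integers; so $c^\star$ admits a description of polynomially bounded bit-length $B$. I determine its bits one at a time: having fixed a bit-prefix $\sigma$ of $c^\star$, I ask whether some completion of $\sigma0$ (resp.\ $\sigma1$) extends to a full vector $c$ with $(c,a)$ a YES-instance of SVR$_{p^{n+2}}$-Verification. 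Because $c^\star$ is the unique accepted vector, exactly the correct next bit yields ``yes'', so the search deterministically traces out $c^\star$ in $B$ rounds. Each query is a single existential quantifier over the Verification problem, hence lies in $\Sigma^p_{k+1}$ if Verification is in the $k$-th level of PH; polynomially many adaptive such queries place the computation of $c^\star$, and therefore of $\max_{c} a\cdot c$, in $\Delta^p_{k+2}$, i.e.\ inside the $(k+2)$-th level. Finally, by Theorem~\ref{thm:opt-hard} every \#P function is computable in polynomial time with an SV$_{p^{n+2}}$-Optimization oracle, so $\#P\subseteq P^{\Delta^p_{k+2}}=\Delta^p_{k+2}$, contained in the $(k+2)$-th level, as desired.

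The step I expect to be the main obstacle is the quantifier bookkeeping of the prefix search: verifying that $c^\star$ genuinely has polynomial bit-length (so the search has polynomial depth), that each query costs only one extra existential alternation on top of the level of Verification, and that the Turing (rather than many-one) nature of the hardness in Theorem~\ref{thm:opt-hard} does not inflate the level beyond $k+2$. The geometric content — that the linear objective is maximized by the LTF whose weights are the objective vector, which itself is a valid Verification weight vector — is exactly what makes this bookkeeping go through, and it is inherited from the proof of Theorem~\ref{thm:opt-hard}.
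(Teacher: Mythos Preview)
Your argument is correct and takes a genuinely different route from the paper.

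The paper proceeds through the polytope $\mathcal{C}_{p^{n+2}}$: it first shows (via Carath\'eodory's theorem) that a level-$k$ Verification oracle yields a level-$(k+1)$ \emph{membership} oracle for $\mathcal{C}_{p^{n+2}}$, and then invokes the ellipsoid method to turn membership into optimization at level $k+2$. This requires checking that $\mathcal{C}_{p^{n+2}}$ has non-empty interior and appealing to the equivalence between membership and optimization from Gr\"otschel--Lov\'asz--Schrijver, none of which is entirely trivial bookkeeping.

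You instead exploit the fact, already established inside the proof of Theorem~\ref{thm:opt-hard}, that on the hard instances the maximizer is \emph{known in advance}: it is the semivalue vector of the LTF with the very same weight vector $a$. Since for fixed $a$ the Verification problem has a unique YES-vector $c^\star$, you can recover $c^\star$ coordinate by coordinate via prefix search, each step being a single existential query over Verification. This sidesteps Carath\'eodory, the membership oracle, and the ellipsoid method entirely, and arguably gives a cleaner accounting of why only two extra quantifier alternations are needed. The paper's approach, on the other hand, would still go through if the maximizer were not explicitly identifiable --- it treats the polytope as a black box --- so it is more robust to changes in the setup, whereas your argument leans on the specific structure of the reduction in Theorem~\ref{thm:opt-hard}. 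Both land at the same level $k+2$; your version is more elementary given what has already been proved.

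One small point worth making explicit when you write this up: fix a canonical polynomial-length encoding of rationals (e.g.\ numerator/denominator in lowest terms with a common poly$(n)$ bound on bit-length derived from the $p^{n+2}_t$'s and $a$) so that the prefix search is well-defined and terminates in polynomially many rounds.
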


The main idea behind the proof is that one can solve the linear optimization problem using a membership oracle 
of the polytope which can be obtained if we have an efficient algorithm for the restricted verification problem. 
Since the vertices of the polytope correspond to semivalues of linear threshold functions, if we have an efficient 
algorithm for the verification problem, then we can efficiently verify that a vector is a vertex of the polytope, 
and. using Caratheodory's theorem, we can get a membership oracle. In this way, we obtain a contradiction, 
unless the polynomial hierarchy collapses: if the verification problem is in the polynomial hierarchy, 
then a \#P-hard problem lies in the polynomial hierarchy.  

The {\em restricted membership problem}
for any probability vector $p^{n+2}$ is defined below:

\medskip

\fbox{\parbox{6in}{

\noindent \textbf{Name:} SVR$_{p^{n+2}}$-Membership Problem

\smallskip

\textbf{Input:} A vector $c=(c_1, \dots, c_{n+2}) \in \mathbb{Q}^{n+2}$.\\
\textbf{Question:} Is $c$ in $\mathcal{C}_{p^{n+2}}$ ? 

}}

\medskip

\begin{proof}[Proof of Theorem~\ref{thm:svr-hard}]
We first prove the following lemma that shows how an efficient oracle 
for the restricted verification problem 
can be used to obtain a membership oracle:

\begin{lemma}
If the SVR$_{p^{n+2}}$-Verification problem is in the $k$-th level of PH, 
then the SVR$_{p^{n+2}}$-Membership problem is in the $(k+1)$-level.
\end{lemma}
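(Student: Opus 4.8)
The plan is to exhibit SVR$_{p^{n+2}}$-Membership as an $\mathrm{NP}$ computation relative to an SVR$_{p^{n+2}}$-Verification oracle, i.e.\ SVR$_{p^{n+2}}$-Membership $\in \mathrm{NP}^{\,\mathrm{SVR}_{p^{n+2}}\text{-Verification}}$. Since $\mathrm{NP}^{\Sigma_k^p}=\Sigma_{k+1}^p$ and oracle access is insensitive to complementing the oracle (so it is irrelevant whether ``$k$-th level'' is read as $\Sigma_k^p$, $\Pi_k^p$, or $\Delta_k^p$), this immediately yields membership of SVR$_{p^{n+2}}$-Membership in the $(k+1)$-st level of $\mathrm{PH}$ whenever SVR$_{p^{n+2}}$-Verification lies in the $k$-th level.

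Two structural facts about $\mathcal{C}_{p^{n+2}}=\mathrm{Conv}(\mathcal{A}_{p^{n+2}})$ drive the argument. First, as a subset of $\mathbb{R}^{n+2}$ the set $\mathcal{A}_{p^{n+2}}$ contains only finitely many distinct vectors (there are only finitely many LTFs on $n+2$ bits), and every vertex of $\mathcal{C}_{p^{n+2}}$ lies in it; hence Carathéodory's theorem gives that $c\in\mathcal{C}_{p^{n+2}}$ if and only if $c=\sum_{j=1}^{n+3}\lambda_j a^{(j)}$ for some $a^{(1)},\dots,a^{(n+3)}\in\mathcal{A}_{p^{n+2}}$ and $\lambda_j\ge 0$ with $\sum_j\lambda_j=1$. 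Second, every point of $\mathcal{A}_{p^{n+2}}$ is witnessed by a weight vector $w$ of the prescribed form ($w_i>0$ for $i\le n$, $w_{n+1}=w_{n+2}=-\tfrac12\sum_{i\le n}w_i$) of bit-complexity $\mathrm{poly}(n)$: for such $w$ the sign of $w\cdot x$ depends on $(x_{n+1},x_{n+2})$ only through $x_{n+1}+x_{n+2}$ and is forced on the slices $x_{n+1}+x_{n+2}=\pm 2$, so realizing a given point of $\mathcal{A}_{p^{n+2}}$ reduces to realizing a fixed zero-threshold LTF on $n$ variables by a strictly positive weight vector, and a standard argument (rescale the feasible region into a closed pointed polyhedron, take a vertex, and bound its coordinates via Cramer's rule applied to a $\{-1,0,1\}$-valued system) produces such a vector with $\mathrm{poly}(n)$-bit entries. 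Finally, each coordinate of a vector in $\mathcal{A}_{p^{n+2}}$ is, by Definition~\ref{def:semiv} and the assumption that the $p^{n+2}_t$ are $\mathrm{poly}(n)$-bit rationals, itself a $\mathrm{poly}(n)$-bit rational; and once the $a^{(j)}$ are fixed, the Carathéodory coefficients $\lambda_j$ can be taken to be a basic feasible solution of the resulting $(n+3)\times(n+3)$ linear system, hence $\mathrm{poly}(n)$-bit rationals as well.

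The oracle algorithm is then immediate: on input $c\in\mathbb{Q}^{n+2}$, nondeterministically guess weight vectors $w^{(1)},\dots,w^{(n+3)}$ of the prescribed form, vectors $a^{(1)},\dots,a^{(n+3)}\in\mathbb{Q}^{n+2}$, and nonnegative rationals $\lambda_1,\dots,\lambda_{n+3}$, all of $\mathrm{poly}(n)$ bit-complexity; call the SVR$_{p^{n+2}}$-Verification oracle to confirm that $\widetilde{f}^{p^{n+2}}_{w^{(j)},0}(i)=a^{(j)}_i$ for all $i$ and $j$; and accept iff in addition $\sum_j\lambda_j=1$ and $\sum_j\lambda_j a^{(j)}=c$, both checkable in polynomial time. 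By the first structural fact a correct guess exists precisely when $c\in\mathcal{C}_{p^{n+2}}$, and any accepted guess exhibits $c$ as a convex combination of points of $\mathcal{A}_{p^{n+2}}$; by the second, a correct guess can always be taken polynomially bounded. This shows SVR$_{p^{n+2}}$-Membership $\in \mathrm{NP}^{\,\mathrm{SVR}_{p^{n+2}}\text{-Verification}}$, as desired.

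I expect the only genuinely delicate point to be the bit-complexity bookkeeping of the second structural fact --- guaranteeing a $\mathrm{poly}(n)$-bit weight vector of the special form for each vertex of $\mathcal{C}_{p^{n+2}}$ --- since the naive description of the set of admissible weight vectors has exponentially many constraints; the reduction to realizing a single zero-threshold LTF on $n$ variables by positive weights is what makes the pointed-polyhedron/Cramer estimate applicable. Everything else is routine.
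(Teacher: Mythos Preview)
Your proposal is correct and follows essentially the same approach as the paper: use Carath\'eodory's theorem to reduce membership in $\mathcal{C}_{p^{n+2}}$ to the existence of a polynomial-size certificate consisting of at most $n+3$ points of $\mathcal{A}_{p^{n+2}}$, their witnessing weight vectors, and the convex coefficients, and then verify the certificate with the SVR$_{p^{n+2}}$-Verification oracle. The one place where you are more careful than the paper is the bit-complexity of the special-form weight vectors---the paper simply cites Muroga's bound for arbitrary LTFs, whereas you correctly observe that one must produce a representation of the constrained shape $(w_1,\dots,w_n,-W/2,-W/2)$ with $w_i>0$, and you reduce this to realizing a zero-threshold monotone LTF on $n$ variables by strictly positive weights; this extra care is justified and closes a small gap the paper leaves implicit.
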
 

\begin{proof}
Assume that the SVR$_{p^{n+2}}$-Verification problem is in the $k$-th level of PH. 
By Caratheodory's theorem a point $c$ is in $\mathcal{C}_{p^{n+2}}$ iff it is a convex combination 
of at most $n+3$ vertices of $\mathcal{C}_{p^{n+2}}$, i.e., $c=\sum_{i=1}^m \lambda_i x^{(i)}$, 
where $x^{(i)}$ is a vertex, $\lambda_i \geq 0$ for $1 \leq i \leq m \leq n+3$, 
and $\sum_{i=1}^m \lambda_i=1$. So, it can be certified that a given point $c$ is in 
$\mathcal{C}_{p^{n+2}}$ by finding the $m \leq n+3$ vertices $x^{(i)}$ and computing the $m$ 
scaling factors $\lambda_i$. Given the $x^{(i)}$, one can verify that $x^{(i)}$ is a vertex of $\mathcal{C}_{p^{n+2}}$ 
by finding a weight vector $w^{(i)}$ of the form described in Definition 4 such that $x^{(i)}$ 
is the $\widetilde{f}^{p^{n+2}}_{w^{(i)},0}$ vector, as the vertices of $\mathcal{C}_{p^{n+2}}$ 
correspond to linear threshold functions with weights of this specific form. So, if we are given 
the $x^{(i)}$ and the corresponding $w^{(i)}$, we can verify in polynomial time with a $k$-th level 
oracle that $x^{(i)}$ is the $\widetilde{f}^{p^{n+2}}_{w^{(i)},0}$ vector 
as we assumed that the SVR$_p^{n+2}$-Verification problem is in the $k$-th level of PH.
Thus, there is a polynomial-size certificate that can be checked in polynomial time with a $k$-th level oracle 
when $c$ is in $\mathcal{C}_{p^{n+2}}$: the $m \leq n+3$ vertices $x^{(i)}$, where each $x^{(i)}$ can be 
represented by poly($n$) bits by assumption; and the $m$ corresponding $w^{(i)}$ vectors, 
where each $w^{(i)}$ can be represented by poly($n$) bits, 
as every linear threshold function can be represented with weight $w=(w_1, \dots, w_{n+2})$ 
such that each $w_i$ is an integer that satisfies $|w_i| \leq 2^{(O(n\log n))}$ \cite{Mur61}. 
Given the $x^{(i)}$ and the $w^{(i)}$, it can be verified in polynomial time with a $k$-th level 
oracle that the $x^{(i)}$ are vertices and then we can compute in polynomial time the $\lambda_i$ 
coefficients by solving the linear system $c=\sum_{i=1}^m \lambda_i x^{(i)}$. Thus, if the 
SVR$_{p^{n+2}}$-Verification problem is in the $k$-th level of PH, 
the SVR$_{p^{n+2}}$-Membership problem is in the $(k+1)$-level. 
\end{proof}

As $\mathcal{C}_{p^{n+2}}$ has non-empty interior, if the SVR$_{p^{n+2}}$-Membership problem is in the $(k+1)$-level of PH, 
then using the ellipsoid algorithm, we could solve the optimization problem using a polynomial number 
of membership-oracle calls (page 189, \cite{ShL82}). Hence, we would have that the 
SV$_{p^{n+2}}$-Optimization problem, which by Theorem~\ref{thm:opt-hard} is $\#$P-hard, 
is in the $(k+2)$-level of PH. This completes the proof of Theorem~\ref{thm:svr-hard}.
\end{proof}

\paragraph{Hardness of the Verification Problem for Weighted Voting Games} 
One important issue is that the computational problems we have considered so far 
involve linear threshold functions some of whose weights can be negative. 
This seemed necessary to some extent for our arguments, as it is crucially exploited 
in the proof of Theorem~\ref{thm:opt-hard}.

We now show how to switch to weighted voting games (i.e., LTFs with non-negative weights), 
which was our initial goal. Using a bijection between the semivalues of a linear threshold function 
with weight vector $w = (w_1, \ldots, w_n)$ of the form described in Definition \ref{poly} 
and a linear threshold function with weight vector $w' = (|w_1|, \ldots, |w_n|)$, 
we show the equivalence between the restricted verification problem and the verification problem 
for linear threshold functions defined by positive weights.

\medskip

\fbox{\parbox{6in}{

\noindent \textbf{Name:} SV$_{p^{n}}$-Verification Problem 

\smallskip

\textbf{Input:} A vector $(c_1, \dots, c_{n}) \in \mathbb{Q}^{n}$, a vector $w=(w_1,\dots, w_{n}) \in \mathbb{Q}_{+}^n$ 
and $\theta \in \mathbb{Q}$.\\
\textbf{Question:} Does it hold that $\widetilde{f}^{p^{n}}_{w, \theta}(i)=c_i$ for $i \in \{1,\dots, {n}\}$? 

}}

\medskip

\begin{theorima} \label{thm:verification-equiv}
If the SV$_{p^{n+2}}$-Verification problem  is in the $k$-th level of PH, 
then the SVR$_{p^{n+2}}$-Verification problem is in the $k$-th level of PH.
\end{theorima}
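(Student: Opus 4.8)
The plan is to give a polynomial-time many-one reduction from the SVR$_{p^{n+2}}$-Verification problem to the SV$_{p^{n+2}}$-Verification problem; since every level of the polynomial hierarchy is closed under polynomial-time many-one reductions, this immediately yields the statement. So let an SVR$_{p^{n+2}}$-Verification instance be given by a target vector $c \in \mathbb{Q}^{n+2}$ and a weight vector $w=(w_1,\dots,w_n,-W/2,-W/2)$ of the form in Definition~\ref{poly} (so $w_i>0$ for $i \leq n$ and $W=\sum_{i=1}^n w_i$); the question is whether $\widetilde{f_{w,0}}^{p^{n+2}}(i)=c_i$ for all $i$. I would pass to the \emph{strictly positive} weight vector $w' := (w_1,\dots,w_n,W/2,W/2) \in \mathbb{Q}_{+}^{n+2}$, keep threshold $\theta'=0$, and produce a shifted target $c'$ as described next.

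The key step --- and the only one I expect to be nonroutine --- is to observe that $f_{w',0}$ and $f_{w,0}$ differ by a function that does \emph{not depend on} $w$. Whenever $x_{n+1}\neq x_{n+2}$ we have $w\cdot x = w'\cdot x = \sum_{i\leq n}w_i x_i$, so the two LTFs agree there. On the slab $x_{n+1}=x_{n+2}=1$ we have $w\cdot x = \sum_{i\leq n}w_i x_i - W = -2\sum_{i:\,x_i=-1}w_i$, which is $0$ exactly at the all-ones point $\mathbf 1$ and strictly negative otherwise (because the $w_i$ are positive), while $w'\cdot x = \sum_{i\leq n}w_i x_i + W \geq 0$ throughout that slab; the slab $x_{n+1}=x_{n+2}=-1$ behaves symmetrically. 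Hence
\[ h := f_{w',0}-f_{w,0} \]
equals $2$ on $\{x : x_{n+1}=x_{n+2}=1\}\setminus\{\mathbf 1\}$, equals $-2$ on $\{x : x_{n+1}=x_{n+2}=-1\}\setminus\{(1,\dots,1,-1,-1)\}$, and is $0$ everywhere else; in particular it is independent of $w$. Since the semivalue functional $f \mapsto \widetilde{f}^{p^{n+2}}(i)$ of Definition~\ref{def:semiv} is linear in $f$, this gives $\widetilde{f_{w',0}}^{p^{n+2}}(i) = \widetilde{f_{w,0}}^{p^{n+2}}(i) + \delta_i$ for every $i \in \{1,\dots,n+2\}$, where $\delta_i$ is that same linear formula applied to $h$. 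Because $h$ is constant on each of the two slabs apart from two fixed points, the exponential sum defining $\delta_i$ collapses to a $\mathrm{poly}(n)$-size expression in the $p^{n+2}_t$'s and binomial coefficients, so $\delta=(\delta_1,\dots,\delta_{n+2})$ is computable in $\mathrm{poly}(n)$ time and has $\mathrm{poly}(n)$-bit rational entries.

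The reduction then outputs the SV$_{p^{n+2}}$-Verification instance with target $c' := c+\delta$, weight vector $w'$, and threshold $\theta'=0$. Correctness is immediate from the displayed identity: $\widetilde{f_{w,0}}^{p^{n+2}}=c$ holds if and only if $\widetilde{f_{w',0}}^{p^{n+2}}=c+\delta=c'$, so the new instance is a ``YES'' instance exactly when the original one is; moreover $w'\in\mathbb{Q}_{+}^{n+2}$ and $\theta'=0\in\mathbb{Q}$, so it is a legal SV$_{p^{n+2}}$-Verification instance, and the whole transformation runs in polynomial time. Consequently, if SV$_{p^{n+2}}$-Verification lies in the $k$-th level of PH, so does SVR$_{p^{n+2}}$-Verification. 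The one delicate point to pin down carefully is the claim that $h$ is independent of $w$: it rests on the fact that, for positive weights on the first $n$ coordinates, the sign of $\sum_{i\leq n}w_i x_i \pm W$ is forced by $x$ alone once $x_{n+1}=x_{n+2}$ (with the two exceptional points merely artefacts of the convention $\mathrm{sign}(0)=1$); once that is established, the remaining verification of the identity and of the complexity bookkeeping is routine.
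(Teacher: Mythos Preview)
Your proposal is correct and follows essentially the same approach as the paper: reduce the restricted problem to the positive-weights problem by replacing $w=(w_1,\dots,w_n,-W/2,-W/2)$ with $w'=(w_1,\dots,w_n,W/2,W/2)$, threshold $0$, and a target shifted by a fixed, $w$-independent correction vector $\delta$. The paper packages this via Lemma~\ref{pton}, which computes the shifts explicitly (namely $\delta_i=-2(p^{n+2}_{n+1}-p^{n+2}_{n-1})$ for $i\le n$ and $\delta_i=2\sum_{t=0}^{n-1}\binom{n}{t}(p^{n+2}_t+p^{n+2}_{t+1})$ for $i\in\{n+1,n+2\}$), whereas you argue more abstractly that $h=f_{w',0}-f_{w,0}$ is a fixed function supported on the two slabs $x_{n+1}=x_{n+2}$ and hence its semivalue vector is polynomial-time computable; both arguments are equivalent.
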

\begin{proof}
We use the following lemma that shows how one can compute the semivalues 
of a linear threshold function with weight vector $w$ of the form 
described in Definition \ref{poly} given the semivalues of the linear threshold function 
with weight vector the absolute values of $w$, and vice-versa.

\begin{lemma} \label{pton}
For a positive integer $n$, fix $a_i > 0, 1 \leq i \leq n$, and $A=\sum_{i=1}^na_i$. 
Consider the LTFs $f(x)=\mathrm{sign}(\sum_{i=1}^n a_i x_i - (A/2) x_{n+1} -(A/2) x_{n+2})$ and
$g(x)=\mathrm{sign}(\sum_{i=1}^n a_i x_i+(A/2) x_{n+1}+(A/2) x_{n+2})$. Then, 
we have the following: 
\begin{enumerate}
\item[(i)] For $1 \leq i \leq n $, it holds
$\widetilde{g}^{p^{n+2}}(i)=\widetilde{f}^{p^{n+2}}(i)-2(p^{n+2}_{n+1}-p^{n+2}_{n-1})$, and 
\item[(ii)] For $n+1 \leq i \leq n+2$, it holds 
$\widetilde{g}^{p^{n+2}}(i)=\widetilde{f}^{p^{n+2}}(i)+2\sum_{t=0}^{n-1}{\binom{n}{t}}(p^{n+2}_t+p^{n+2}_{t+1})$.
\end{enumerate}
\end{lemma}
\begin{proof} 
For $i \in \{1, \dots, n\}$, from the definitions of semivalues and functions $f$, $g$ we get: 
\begin{dmath*}
\ \widetilde{g}^{p^{n+2}}(i)
=\sum_{x:x_i=-1}p_{wt(x)}^{n+2}(g(x^i)-g(x))
=\sum_{\substack{x:x_i=-1,\\x_{n+1}=x_{n+2}=-1}}p_{wt(x)}^{n+2}\left(\mathrm{sign}\left(\sum_{j\leq n:j \neq i}a_jx_j+a_i-A\right)-\mathrm{sign}\left(\sum_{j\leq n:j \neq i}a_jx_j-a_i-A\right)\right)+
\sum_{\substack{x:x_i=-1,\\x_{n+1}=x_{n+2}=1}}p_{wt(x)}^{n+2}\left(\mathrm{sign}\left(\sum_{j\leq n:j \neq i}a_jx_j+a_i+A\right)-\mathrm{sign}\left(\sum_{j\leq n:j \neq i}a_jx_j-a_i+A\right)\right)
\end{dmath*}
\begin{dmath*}
+\sum_{\substack{x:x_i=-1,\\x_{n+1}\neq x_{n+2}}}p_{wt(x)}^{n+2}\left(\mathrm{sign}\left(\sum_{j\leq n:j \neq i}a_jx_j+a_i\right)-\mathrm{sign}\left(\sum_{j\leq n:j \neq i}a_jx_j-a_i\right)\right)
\end{dmath*} 
and
\begin{dmath*}
\widetilde{f}^{p^{n+2}}(i)
=\sum_{x:x_i=-1}p_{wt(x)}^{n+2}(f(x^i)-f(x))
=\sum_{\substack{x:x_i=-1,\\x_{n+1}=x_{n+2}=1}}p^{n+2}_{wt(x)}\left(\mathrm{sign}\left(\sum_{j\leq n:j \neq i}a_jx_j+a_i-A\right)-\mathrm{sign}\left(\sum_{j\leq n:j \neq i}a_jx_j-a_i-A\right)\right)
+\sum_{\substack{x:x_i=-1,\\x_{n+1}=x_{n+2}=-1}}p^{n+2}_{wt(x)}\left(\mathrm{sign}\left(\sum_{j\leq n:j \neq i}a_jx_j+a_i+A\right)-\mathrm{sign}\left(\sum_{j\leq n:j \neq i}a_jx_j-a_i+A\right)\right)
+\sum_{\substack{x:x_i=-1,\\x_{n+1}\neq x_{n+2}}}p^{n+2}_{wt(x)}\left(\mathrm{sign}\left(\sum_{j\leq n:j \neq i}a_jx_j+a_i\right)-\mathrm{sign}\left(\sum_{j\leq n:j \neq i}a_jx_j-a_i\right)\right) \;.\end{dmath*} 
As $f(x)=-1$ for any $x \neq \bf{1}$ with $x_{n+1}=x_{n+2}=1$, 
$f(x)=1$ for any $x$ with $x_{n+1}=x_{n+2}=-1$, 
$g(x)=1$ for any $x$ with $x_{n+1}=x_{n+2}=1$, and 
$g(x)=-1$ for any $x\neq(1,\dots,1,-1,-1)$ with $x_{n+1}=x_{n+2}=-1$, 
we get that 
\[\widetilde{f}^{p^{n+2}}(i)-\widetilde{g}^{p^{n+2}}(i)=2(p^{n+2}_{n+1}-p^{n+2}_{n-1}) \;.\] 
For $i = n+1$, we can write:
\begin{dmath*} 
\widetilde{g}^{p^{n+2}}(n+1)
=\sum_{x:x_{n+1}=-1}p_{wt(x)}^{n+2}(g(x^{n+1})-g(x))
=\sum_{\substack{x:x_{n+1}=-1,\\x_{n+2}=-1}}p^{n+2}_{wt(x)}\left(\mathrm{sign}\left(\sum_{j\leq n}a_jx_j\right)-
\mathrm{sign}\left(\sum_{j\leq n}a_jx_j-A\right)\right)
+ \sum_{\substack{x:x_{n+1}=-1,\\x_{n+2}=1}}p^{n+2}_{wt(x)}\left(\mathrm{sign}\left(\sum_{j\leq n}a_jx_j+A\right)
-\mathrm{sign}\left(\sum_{j\leq n}a_jx_j\right)\right)\;. 
\end{dmath*}
and
\begin{dmath*} 
\widetilde{f}^{p^{n+2}}(n+1) 
=\sum_{x:x_{n+1}=-1}p_{wt(x)}^{n+2}(f(x^{n+1})-f(x))
= \sum_{\substack{x:x_{n+1}=-1,\\x_{n+2}=1}}p^{n+2}_{wt(x)}\left(\mathrm{sign}\left(\sum_{j\leq n}a_jx_j-A\right)
-\mathrm{sign}\left(\sum_{j\leq n}a_jx_j\right)\right) 
+\sum_{\substack{x:x_{n+1}=-1,\\x_{n+2}=-1}}p^{n+2}_{wt(x)}\left(\mathrm{sign}\left(\sum_{j\leq n}a_jx_j\right)
-\mathrm{sign}\left(\sum_{j\leq n}a_jx_j+A \right)\right) \;. 
\end{dmath*} 
We thus have 
\begin{dmath*} 
\widetilde{f}^{p^{n+2}}(n+1)-\widetilde{g}^{p^{n+2}}(n+1)
=\sum_{\substack{x:x_{n+1}=-1,\\x_{n+2}=-1}}p^{n+2}_{wt(x)}\left(\mathrm{sign}\left(\sum_{j\leq n}a_jx_j-A\right)
-\mathrm{sign}\left(\sum_{j\leq n}a_jx_j+A\right)\right)
+\sum_{\substack{x:x_{n+1}=-1,\\x_{n+2}=1}}p^{n+2}_{wt(x)}\left(\mathrm{sign}\left(\sum_{j\leq n}a_jx_j-A\right)
-\mathrm{sign}\left(\sum_{j\leq n}a_jx_j+A\right)\right)
=-2\sum_{t=0}^{n-1}{n \choose t}(p^{n+2}_t+p^{n+2}_{t+1}) \;. 
\end{dmath*} 
In the same way, we get that 
\[\widetilde{f}^{p^{n+2}}(n+2)-\widetilde{g}^{p^{n+2}}(n+2)
=-2\sum_{t=0}^{n-1}{n \choose t}(p^{n+2}_t+p^{n+2}_{t+1}) \;.
\] 
This completes the proof of Lemma~\ref{pton}. 
\end{proof}

Given an instance $(a_1, \dots, a_n, -A/2, -A/2)$, $(c_1, \dots, c_{n+2})$ of the SVR$_{p^{n+2}}$-Verification problem, 
we construct the following instance of SV$_{p^{n+2}}$-Verification problem:  
$(a_1, \dots, a_n, +A/2, +A/2)$, $\theta=0$, 
\begin{align*}
&\Big(c_1-2(p^{n+2}_{n+1}-p^{n+2}_{n-1}), \dots, c_n-2(p^{n+2}_{n+1}-p^{n+2}_{n-1}), c_{n+1}+2\sum_{t=0}^{n-1}{\binom{n}{t}}(p^{n+2}_t+p^{n+2}_{t+1}),\\
&c_{n+2}+2\sum_{t=0}^{n-1}{\binom{n}{t}}(p^{n+2}_t+p^{n+2}_{t+1})\Big) \; .
\end{align*}
By Lemma \ref{pton}, we have that the SVR$_{p^{n+2}}$-Verification instance is a ``YES''-instance 
iff the SV$_{p^{n+2}}$-Verification instance is a ``YES''-instance. 
This completes the proof of Theorem~\ref{thm:verification-equiv}.
\end{proof}

\paragraph{Verification Using Inverse Oracle} 
The final step of our proof is to show that the inverse problem for semivalues 
is at least as hard as the verification problem. While this is intuitively obvious,
the proof requires the following non-trivial structural result:
The semivalues of a weighted voting game characterize the game within the space of weighted voting games. 

\begin{theorima} \label{thm:inverse-vs-ver}
If the SV$_{p^{n}}$-Inverse problem is in the $k$-th level of PH, 
then the SV$_{p^{n}}$-Verification problem is in the $(k+1)$-level.
\end{theorima}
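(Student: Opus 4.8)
The plan is to decide a given instance of the SV$_{p^n}$-Verification problem using one call to an oracle for the SV$_{p^n}$-Inverse problem together with a single co-NP computation. The structural fact I rely on is Lemma~\ref{equiv}: among weighted voting games that share a common threshold, the semivalues vector determines the game uniquely. I also use the elementary observation that, for $\lambda>0$, the LTF $f_{\lambda w,\lambda\theta}$ computes the same Boolean function as $f_{w,\theta}$ and therefore (by Definition~\ref{def:semiv}, which reads off the semivalues from the Boolean function alone) has the same semivalues.

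Given a verification instance $(c,w,\theta)$ with $w\in\mathbb{Q}_+^n$, the first step is to normalize. If $W:=\sum_{i=1}^n w_i=0$, then $f_{w,\theta}$ is constant, all of its semivalues equal $0$, and we answer ``YES'' iff $c=(0,\dots,0)$. Otherwise set $\tilde w:=w/W$ and $\tilde\theta:=\theta/W$; then $\sum_i\tilde w_i=1$ and $f_{\tilde w,\tilde\theta}=f_{w,\theta}$, so $(c,\tilde w,\tilde\theta)$ is an equivalent instance. Now query the Inverse oracle on $(c,\tilde\theta)$. If it answers ``NO'', there is no $v\in\mathbb{Q}_+^n$ with $\sum_i v_i=1$ and $\widetilde f^{p^n}_{v,\tilde\theta}(i)=c_i$ for all $i$; since $\tilde w$ is itself such a candidate, this forces $\widetilde f^{p^n}_{\tilde w,\tilde\theta}\neq c$, so we answer ``NO''. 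If instead the oracle returns $w'\in\mathbb{Q}_+^n$ with $\sum_i w'_i=1$ realizing $c$ at threshold $\tilde\theta$, then $f_{\tilde w,\tilde\theta}$ and $f_{w',\tilde\theta}$ are weighted voting games with the common threshold $\tilde\theta$, so by Lemma~\ref{equiv} they have equal semivalues iff they compute the same Boolean function. Hence $\widetilde f^{p^n}_{\tilde w,\tilde\theta}=c$ iff $\mathrm{sign}(\tilde w\cdot x-\tilde\theta)=\mathrm{sign}(w'\cdot x-\tilde\theta)$ for every $x\in\{-1,1\}^n$ --- a universal statement whose negation (guess $x$, evaluate two dot products of $\mathrm{poly}(n)$-bit rationals) is in NP, hence a single co-NP query, which we perform to output the answer.

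For the complexity accounting, read ``the Inverse problem is in the $k$-th level of PH'' as: the (search) problem is solved by a polynomial-time algorithm with access to an oracle for the $k$-th level. The procedure above then runs in polynomial time, issues one query to that level-$k$ oracle (to obtain the witness $w'$, which is genuinely needed and not just a yes/no answer), and one $\Pi_1^p$ query; since $\Pi_1^p\subseteq\mathrm P^{\Sigma_k^p}=\Delta_{k+1}^p$ for $k\ge 1$ and the whole computation is contained in co-NP when $k=0$, the SV$_{p^n}$-Verification problem lies in the $(k+1)$-st level of PH, as claimed.

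The step requiring the most care is the normalization at the outset. The tempting move --- querying the Inverse oracle directly at the given $\theta$ --- is unsound: the Inverse problem insists on weight vectors of coordinate sum $1$, and, e.g., when $\theta>1$ no such vector realizes any non-trivial function at threshold $\theta$, so a ``NO'' answer would say nothing about $f_{w,\theta}$. Rescaling to coordinate sum $1$ leaves the Boolean function (and hence the semivalues) unchanged while turning the normalized vector $\tilde w$ into a legitimate candidate solution, which is precisely what makes the ``NO'' branch valid. Beyond that, the argument is routine once Lemma~\ref{equiv} is available, since it reduces the semantic question ``are these the semivalues of $f_{w,\theta}$?'' to the co-NP-decidable syntactic question ``do $f_{\tilde w,\tilde\theta}$ and $f_{w',\tilde\theta}$ agree on all inputs?''.
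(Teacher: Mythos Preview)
Your approach is essentially the paper's: normalize so that $\sum_i \tilde w_i=1$, call the Inverse oracle at the rescaled threshold, and if a witness $w'$ is returned, use a single co-NP query and Lemma~\ref{equiv} to compare $f_{\tilde w,\tilde\theta}$ with $f_{w',\tilde\theta}$. Your remarks about why the rescaling is necessary are correct and well put.

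There is, however, one genuine slip in your co-NP step. You assert that, by Lemma~\ref{equiv}, the two games have equal semivalues \emph{iff} they compute the same Boolean function on all of $\{-1,1\}^n$. That is stronger than what Lemma~\ref{equiv} gives. Lemma~\ref{equiv} (together with the trivial tie case $w\cdot x=\theta=v\cdot x$) only yields agreement on those $x$ with $\mu'_{p^n}(x)=p^n_{wt(x)}+p^n_{wt(x)-1}>0$; and conversely, by inspection of Definition~\ref{def:semiv}, the semivalues depend on $f$ only through its values at such $x$. So the correct equivalence is: equal semivalues iff $f_{\tilde w,\tilde\theta}(x)=f_{w',\tilde\theta}(x)$ for all $x$ with $\mu'_{p^n}(x)>0$. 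If $p^n$ has zero entries (which is allowed --- ``reasonable'' only requires a single positive $p^n_t$ in the middle range), two LTFs with the same semivalues and identical $(\tilde\theta,\text{weight sum})$ may well differ at some $x$ outside the support of $\mu'_{p^n}$; your universal check over all $x$ would then wrongly output ``NO''. The fix is trivial and is exactly what the paper does: in the co-NP query, quantify only over $x$ with $\mu'_{p^n}(x)>0$ (this is a polynomial-time checkable predicate on $x$), i.e., ask whether there is no $x$ with $\mu'_{p^n}(x)>0$ and $f_{\tilde w,\tilde\theta}(x)\neq f_{w',\tilde\theta}(x)$.
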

\begin{proof}
The proof makes essential use of the following lemma that shows 
that if two LTFs with normalized weights and 
the same threshold have the same semivalues, then they are equal
on all points $x$ that are given positive probability by the distribution $\mu'_{p^n}$. 
This lemma is qualitatively similar to (and inspired by) Chow's Theorem~\cite{chow}, 
that shows that a linear threshold function is uniquely determined by its 
Chow parameters:

\begin{lemma} \label{equiv}
Let $f(x) = \mathrm{sign}(w \cdot x - \theta)$ and $g(x) = \mathrm{sign}(v \cdot x - \theta)$
where $\sum_{i=1}^n w_i = \sum_{i=1}^n v_i$. If $\widetilde{f}^{p^{n}}(i)=\widetilde{g}^{p^{n}}(i)$ for $i \in \{1, \dots, n\}$, 
then $f(x)=g(x)$ for all $x \in \{-1,1\}^n$ such that $p^n_{wt(x)}+p^n_{wt(x)-1} \neq 0$ 
and $|w \cdot x-\theta|+|v \cdot x-\theta| \neq 0$.
\end{lemma}
\begin{proof} 
By our assumption that $\widetilde{f}^{p^{n}}(i)=\widetilde{g}^{p^{n}}(i)$ for $i \in \{1, \dots, n\}$
it follows that:
\begin{align*}
&2\sum_{i=1}^n w_i (\widetilde{f}^{p^n}(i)-\widetilde{g}^{p^{n}}(i))+
2\sum_{i=1}^n v_i(\widetilde{g}^{p^{n}}(i)-\widetilde{f}^{p^n}(i))
-\theta \sum_{x \in \{-1,1\}^n}(f(x)-g(x))\mu'_{p^n}(x) \\&
- \theta\sum_{x \in \{-1,1\}^n}(g(x)-f(x))\mu'_{p^n}(x)=0
\end{align*}
Recalling our reformulation of the semivalues (\ref{sem}), we equivalently have:
\begin{align*}
&\sum_{i=1}^n w_i \left(\widehat{f}^{p^n}(i)-\widehat{g}^{p^n}(i)+C_f^{p^n}-C^{p^n}_g\right)
+ \sum_{i=1}^n v_i \left(\widehat{g}^{p^n}(i)-\widehat{f}^{p^n}(i)+C_g^{p^n}-C^{p^n}_f\right)\\
&-\theta \sum_{x \in \{-1,1\}^n}(f(x)-g(x)) \mu'_{p^n}(x) 
- \theta \sum_{x \in \{-1,1\}^n}(g(x)-f(x)) \mu'_{p^n}(x)=0\\
& \Leftrightarrow 
\sum_{x \in \{-1,1\}^n}(f(x)-g(x))\mu'_{p^n}(x) (w \cdot x) 
+\sum_{x \in \{-1,1\}^n}(g(x)-f(x))\mu'_{p^n}(x) (v \cdot x) \\
&-\theta\sum_{x \in \{-1,1\}^n}(f(x)-g(x))\mu'_{p^n}(x)
-\theta\sum_{x \in \{-1,1\}^n}(g(x)-f(x))\mu'_{p^n}(x)=0 \\
& \Leftrightarrow \sum_{x \in \{-1,1\}^n} (f(x)-g(x)) \mu'_{p^n}(x) (w\cdot x-\theta)
+\sum_{x \in \{-1,1\}^n}(g(x)-f(x))\mu'_{p^n}(x)(v \cdot x-\theta)=0\\
&\Leftrightarrow \sum_{x \in \{-1,1\}^n} \mu'_{p^n}(x)|f(x)-g(x)| |w \cdot x-\theta|
+\sum_{x \in \{-1,1\}^n}\mu'_{p^n}(x)|g(x)-f(x)| |v \cdot x-\theta|=0 \\
& \Leftrightarrow \sum_{x \in \{-1,1\}^n} \mu'_{p^n}(x) |f(x)-g(x)| (|w \cdot x-\theta|+|v \cdot x-\theta|)=0 \;.
\end{align*}
Hence, for any $x \in \{-1,1\}^n$ such that $|w \cdot x-\theta|+|v \cdot x-\theta| \neq 0$ 
and $\mu'_{p^{n}}(x) \neq 0$, we have that $f(x)=g(x)$.
This completes the proof of Lemma~\ref{equiv}.
\end{proof}

We are now ready to complete the proof.
Given an SV$_{p^{n}}$-Verification instance
$a=(a_1, \dots, a_n)$, $\theta$, $(c_1, \dots, c_{n})$, 
we create the following instance of the SV$_{p^n}$-Inverse problem: 
$\theta' = \theta/\sum_{i=1}^n{a_i}$, $(c_1, \dots, c_{n})$. Then, if the SV$_{p^{n}}$-Inverse instance 
is a ``NO''-instance, we have a ``NO''-instance of the SV$_{p^{n}}$-Verification problem. 
If the SV$_{p^n}$-Inverse output is a weight vector $w=(w_1, \dots, w_n)$, 
we can check with a co-NP oracle if the functions $f_{w,\theta'}$ 
and $f_{a,\theta}$ have the same semivalues: 
By Lemma \ref{equiv}, they have the same semivalues iff 
there is no $x \in \{-1, 1\}^n$ such that $\mu'_{p^n}(x) > 0$ and  
$f_{w,\theta'}(x) \neq f_{a,\theta}(x)=f_{\frac{a}{\sum_{i=1}^n{a_i}}, \frac{\theta}{\sum_{i=1}^n{a_i}}}(x)$.
This completes the proof of Theorem~\ref{thm:inverse-vs-ver}.
\end{proof}

Theorem~\ref{main} now follows by combining Theorems~\ref{thm:svr-hard},~\ref{thm:verification-equiv}, 
and~\ref{thm:inverse-vs-ver}.

\section{Conclusions}
The inverse power index problem has received considerable attention in game theory and social choice, 
and the inverse Banzhaf index problem has been relevant in other fields as well, such as circuit complexity 
and computational learning. In this paper, we proved that the inverse semivalue problem, for {\em reasonable} 
probability distributions, is computationally intractable. As special cases, we deduce that the inverse Banzhaf 
index and inverse Shapley value problems are also intractable. 
A number of interesting open questions remain: Can we design efficient approximation algorithms for the inverse problem 
in the case of more general semivalues? Can we characterize the computational complexity of the inverse power index problem 
for power indices that do not belong in the semivalues class?

\paragraph{Acknowledgements.} We thank Shaddin Dughmi for his contributions to the early stages of this work.
We are grateful to Kousha Etessami for numerous helpful discussions.

\bibliographystyle{alpha}

\bibliography{biblio_update}

\end{document}